\newcommand{\Acal}{\mathcal{A}}
\newcommand{\Scal}{\mathcal{S}}
\newcommand{\Rbb}{\mathbb{R}}
\definecolor{gray1}{gray}{0.775}
\newcommand{\sm}{\setminus}
\newcommand{\intv}[1]{\left[ #1 \right]}
\newcommand{\es}{\emptyset}
\newcommand{\fpt}{{\sf FPT}\xspace}
\newcommand{\W}{{\sf W}\xspace}
\newcommand{\np}{{\sf NP}\xspace}
\newcommand{\poly}{{\sf P}\xspace}
\newcommand{\xff}{\chi_{\sf FF}\xspace}
\renewenvironment{proof}{\noindent \textsc{Proof:}}{\hfill$\square$\medskip}
\tikzstyle{place}=[
\begin{document}

% DO NOT REMOVE: Creates space for Elsevier logo, ScienceDirect logo
% and ENDM logo
\begin{verbatim}\end{verbatim}\vspace{2.5cm}

\begin{frontmatter}
\title{Ruling out FPT algorithms for\\Weighted Coloring on forests\thanksref{ALL}}

\author{J\'ulio Ara\'ujo$^\dag$, Julien Baste$^\ddag$, and Ignasi Sau$^{\dag, \ddag}$}
\address{$^\dag$ Departamento de Matem\'atica, Universidade Federal do Cear\'a, Fortaleza, Brazil\\
\emph{Email: \href{mailto:julio@mat.ufc.br} {\texttt{\normalshape  julio@mat.ufc.br}}}\\
$^\ddag$ CNRS, LIRMM, Universit\'e de Montpellier, Montpellier, France\\
\emph{Emails: \href{mailto:sau@lirmm.fr} {\texttt{\normalshape  \{baste,sau\}@lirmm.fr}}}}

%\author{Julian Baste\thanksref{emailjulien}}
%\address{LIRMM, Universit\'e de Montpellier, Montpellier, France}
%
%\author{Ignasi Sau\thanksref{emailignasi}}
%\address{CNRS, LIRMM,  Montpellier, France, and\\Departamento de Matem\'atica, Universidade Federal do Cear\'a, Fortaleza, Brazil}

\thanks[ALL]{This work has been partially supported by CNPq/Brazil under projects 459466/2014-3 and 310234/2015-8, and by the PASTA project of Universit\'e de Montpellier, France.}

%\thanks[emailjulio]{Email: \href{mailto:julio@mat.ufc.br} {\texttt{\normalshape  julio@mat.ufc.br}}}
%\thanks[emailjulien]{Email: \href{mailto:baste@lirmm.fr} {\texttt{\normalshape baste@lirmm.fr}}}
%\thanks[emailignasi]{Email: \href{mailto:sau@lirmm.fr} {\texttt{\normalshape sau@lirmm.fr}}}

\begin{abstract}
Given a graph $G$, a \emph{proper $k$-coloring} of $G$ is a partition $c = (S_i)_{i\in [1,k]}$ of $V(G)$ into $k$ stable sets $S_1,\ldots, S_{k}$. Given a weight function $w: V(G) \to \Rbb^+$, the \emph{weight of a color} $S_i$ is defined as $w(i) = \max_{v \in S_i} w(v)$ and the \emph{weight of a coloring} $c$  as $w(c) = \sum_{i=1}^{k}w(i)$.  Guan and Zhu [Inf. Process. Lett., 1997] defined the \emph{weighted chromatic number} of a pair $(G,w)$, denoted by $\sigma(G,w)$, as the minimum weight of a proper coloring of $G$. For a positive integer $r$, they also defined $\sigma(G,w;r)$ as the minimum of $w(c)$ among all proper $r$-colorings $c$ of $G$.

The complexity of determining $\sigma(G,w)$ when $G$ is a tree was  open for almost 20 years, until Ara\'ujo \emph{et al}.~[SIAM J. Discrete Math., 2014] recently proved that the problem cannot be solved in time $n^{o(\log n)}$ on $n$-vertex trees unless the Exponential Time Hypothesis (ETH) fails.

%It is worth mentioning that there exists an algorithm solving the problem in time $n^{O(\log n)}$.

The objective of this article is to provide hardness results for computing $\sigma(G,w)$ and $\sigma(G,w;r)$ when $G$ is a tree or a forest, relying on complexity assumptions weaker than the ETH. Namely, we study the problem from the viewpoint of parameterized complexity, and we assume the weaker hypothesis $\fpt \neq \W[1]$. Building on the techniques of Ara\'ujo \emph{et al}., we prove that when $G$ is a forest, computing $\sigma(G,w)$ is {\sf W}[1]-hard parameterized by the size of a largest connected component of $G$, and that computing $\sigma(G,w;r)$ is {\sf W}[2]-hard parameterized by $r$.
Our results rule out the existence of \fpt algorithms for computing these invariants on trees or forests for many natural choices of the parameter.

%Given a pair $(G,w)$ and a real number $k$, in this work we focus on the parameterized complexity of determining whether $\sigma(G,w)\leq k$, for different choices of parameters, but always considering $G$ as a forest.

%First we prove that this problem is {\sf W}[1]-hard when the parameter is the size of the largest component of $G$, i.e. the size of the largest subtree of $G$. Then, we show that it is {\sf W}[2]-hard when the parameter is the maximum number of colors that may be used to color $G$.
\end{abstract}

\begin{keyword}
weighted coloring; max-coloring; forests; parameterized complexity; {\sf W}[1]-hard.
\end{keyword}

\end{frontmatter}

\section{Introduction}
\label{sec:intro}

A \emph{(vertex) $k$-coloring} of a graph $G = (V,E)$ is a function $c:V(G)\to \{1,\ldots, k\}$. Such coloring $c$ is \emph{proper} if $c(u)\neq c(v)$ for every edge $\{u,v\} \in E(G)$. All the colorings we consider in this paper are proper, hence we may omit the word ``proper''. The \emph{chromatic number} $\chi(G)$ of $G$ is the minimum integer $k$ such that $G$ admits a $k$-coloring. Given a graph $G$, determining $\chi(G)$ is the goal of the classical \textsc{Vertex Coloring} problem.
%, which is one of the Karp's 21 {\sf NP}-hard problems~\cite{Karp72}.
If $c$ is a $k$-coloring of $G$, then $S_i = \{u\in V(G)\mid c(u) = i\}$ is a stable (a.k.a. independent) set. Consequently, a $k$-coloring $c$ can be seen as a partition of $V(G)$ into stable sets $S_1,\ldots, S_k$. We often see a coloring as a partition in the sequel.

We study a generalization of \textsc{Vertex Coloring} for vertex-weighted graphs that has been defined by Guan and Zhu~\cite{GZ97}. Given a graph $G$ and a weight function $w: V(G) \to \Rbb^+$, the \emph{weight of a color} $S_i$ is defined as $w(i) = \max_{v \in S_i} w(v)$. Then, the \emph{weight of a coloring} $c$ is $w(c) = \sum_{i=0}^{k-1}w(i)$. In the \textsc{Weighted Coloring} problem, the goal is to determine the \emph{weighted chromatic number} of a pair $(G,w)$, denoted by $\sigma(G,w)$, which is the minimum weight of a coloring of $(G,w)$. A coloring $c$ of $G$ such that $w(c) = \sigma(G,w)$ is an \emph{optimal weighted coloring}. Guan and Zhu~\cite{GZ97} also defined, for a positive integer $r$, $\sigma(G,w;r)$ as the minimum of $w(c)$ among all $r$-colorings $c$ of $G$, or as $+\infty$ is no $r$-coloring exists. Note that $\sigma(G,w) = \min_{r \geq 1}\sigma(G,w;r)$.  It is worth mentioning that the \textsc{Weighted Coloring} problem is also sometimes called \textsc{Max-Coloring} in the literature; see for instance~\cite{KaMe09,PemmarajuPR05}.

Guan and Zhu defined this problem in order to study practical applications related to resource allocation, which they describe in detail in~\cite{GZ97}.
%Among these applications, they refer to the dynamic storage allocation problem and the distributed
%dual bus network media access control protocol, which is a standard IEEE 802.6 for metropolitan networks.
One should observe that if all the vertex weights are equal to one, then $\sigma(G,w) = \chi(G)$, for every graph $G$. Consequently, determining $\sigma(G,w)$ and $\sigma(G,w;r)$ are {\sf NP}-hard problems on general
graphs~\cite{Karp72}. In fact, this problem has been shown to be {\sf NP}-hard even on split graphs, interval graphs, triangle-free planar graphs with bounded degree, and bipartite graphs~\cite{DWMP02, WDEMP09, EMP06}. On the other hand, the weighted chromatic
number of cographs and of some subclasses of bipartite graphs can be found in
polynomial time~\cite{DWMP02, WDEMP09}.

In this work we focus on the case where $G$ is a forest, which has attracted considerable attention in the literature. Concerning graphs of bounded treewidth\footnote{We will not define treewidth here, just recall that forests are the graphs with treewidth 1; see~\cite{Die05,CyganFKLMPPS15}.},  Guan and Zhu~\cite{GZ97} showed, by using standard dynamic programming techniques, that
on an $n$-vertex graph of treewidth $t$ the parameter $\sigma(G,w;r)$ can be computed in time
\begin{equation}\label{eq:algorithm}
%O((n(G)+1)^rr^{t+1}).
n^{O(r)} \cdot r^{O(t)}.
\end{equation}
% that if one fixes two parameters $r,t\in \mathbb{N}$, then the minimum weight of a coloring using at most $r$ colors in a graph $G$ of tree-width at most $t$ can be computed in time $\mathcal O((n(G)+1)^rr^{t+1})$.
Guan and Zhu~\cite{GZ97} left as an open problem whether \textsc{Weighted Coloring} is polynomial on trees and, more generally, on graphs of bounded treewidth. Escoffier \emph{et al}.~\cite{EMP06} found a polynomial-time approximation scheme to solve \textsc{Weighted Coloring} on bounded treewidth graphs, and
Kavitha and Mestre~\cite{KaMe09} showed that the problem is in \poly on the class of trees where vertices with degree at
least three induce a stable set.

But the question of Guan and Zhu has been answered only recently, when  Ara\'ujo \emph{et al}.~\cite{Julio} showed that, unless the Exponential Time Hypothesis (ETH)\footnote{The ETH states that 3-\textsc{SAT} cannot be solved in subexponential
time; see~\cite{ImpagliazzoP01} for more details.} fails, there is no algorithm computing the weighted chromatic number of $n$-vertex trees
in time $n^{o(\log n)}$.

As discussed in~\cite{Julio}, it is worth mentioning that the above lower bound is tight. Indeed, Guan and Zhu~\cite{GZ97} showed that the maximum number of colors used by an optimal weighted coloring of any weighted graph $(G,w)$ is at most its so-called \emph{first-first chromatic number} (see~\cite{GZ97} for the definition), denoted by $\xff(G)$. On the other hand, Linhares and Reed~\cite{SaRe06} proved that for any $n$-vertex graph $G$ of treewidth at most $t$, it holds that  $\xff(G) = O(t \log n)$. These observations together with  Equation~(\ref{eq:algorithm}) imply that the \textsc{Weighted Coloring} problem can be solved on forests in time $n^{O(\log n)}$.

Therefore, \textsc{Weighted Coloring} on forests is unlikely to be in \poly, as this would contradict the ETH, and also unlikely to be \np-hard, as in that case all problems in \np could be solved in subexponential time, contradicting again the ETH.

\smallskip
\noindent \textbf{Our results}. The objective of this article is to provide hardness results for computing $\sigma(G,w)$ and $\sigma(G,w;r)$ when $G$ is a forest, relying on complexity assumptions weaker than the ETH. Namely, we study the problem from the viewpoint of parameterized complexity (the basic definitions can be found in Section~\ref{sec:prelim}), and we assume the weaker hypothesis $\fpt \neq \W[1]$. Indeed, it is well-known~\cite{CyganFKLMPPS15} that the ETH implies that $\fpt \neq \W[1]$, which in turn implies that ${\sf P} \neq {\sf NP}$.

Our first result is that when $(G,w)$ is a weighted forest, computing $\sigma(G,w)$ is {\sf W}[1]-hard parameterized by the size of a largest connected component of $G$. This is proved by a parameterized reduction from \textsc{Independent Set} that builds on the techniques introduced by Ara\'ujo \emph{et al}.~\cite{Julio}. This result essentially rules out the existence of \fpt algorithms for \textsc{Weighted Coloring} on forests for many natural choices of the parameter: treewidth, cliquewidth,  maximum degree, maximum diameter of a connected component, number of colors in an optimal coloring, etc. Indeed, all these parameters are bounded by the size of a largest connected component of $G$ (for the number of colors, this can be proved by using that they are bounded by $\xff(G)$~\cite{GZ97}, which is easily seen to be bounded by the size of a largest connected component).

We then move our attention to the parameter $\sigma(G,w;r)$ and we prove, by a parameterized reduction from \textsc{Dominating Set} similar to the first one, that computing $\sigma(G,w;r)$ on forests is {\sf W}[2]-hard parameterized by $r$. Interestingly, if we assume the ETH, our reduction together with the results of Chen \emph{et al}.~\cite{ChenHKX06} imply that, on graphs of bounded treewidth, the running time given by Equation~(\ref{eq:algorithm}) is asymptotically optimal, that is, there is no algorithm computing  $\sigma(G,w;r)$ on $n$-vertex graphs of bounded treewidth in time $n^{o(r)}$.

We would like to mention that, although our reductions use several key ideas introduced by Ara\'ujo \emph{et al}.~\cite{Julio}, our results are incomparable to those of~\cite{Julio}.

As further research, it would be interesting to identify ``reasonable'' parameters yielding \fpt algorithms for \textsc{Weighted Coloring} on forests. Probably, it might make sense to consider combined parameters that take into account, on top of  the aforementioned invariants, the number of distinct weights in the input weighted forest.

\medskip
\noindent
\textbf{Organization of the article.} In Section~\ref{sec:prelim} we provide some basic preliminaries about forests, weighted colorings, and parameterized complexity. In Section~\ref{sec:gadgets} we introduce some common gadgets that will be used in both reductions. In Section~\ref{sec:W1} and Section~\ref{sec:W2} we present the $\W[1]$-hardness and $\W[2]$-hardness reductions, respectively.
%Finally, in Section~\ref{sec:concl} we discuss the consequences of our results.

\section{Preliminaries}
\label{sec:prelim}

\textbf{Forests and weighted colorings}. We use standard graph-theoretic notation, and we consider simple undirected graphs without loops nor multiple edges; see~\cite{Die05} for any undefined terminology. Given two integers $i$ and $j$ with $i \leq j$, we denote by $\intv{i,j}$ the set of all integers between $i$ and $j$, including both $i$ and $j$.

If $T$ is a rooted tree, we denote by $r(T)$ the root of $T$.
%A subtree $T'$ of $T$ is a \emph{proper subtree} if there is an edge $e \in E(T)$ such that $T'$ is the connected component of $T \sm \{e\}$ not containing $r(T)$.
A \emph{weighted graph} is a pair $(G,w)$ where $G$ is a graph and  $w: V(G) \to \Rbb^+$ is a weight function.
We say that a weighted graph $(G,w)$ is a \emph{weighted forest} if $G$ is a forest and a \emph{weighted rooted tree} if $G$ is a rooted tree.
If $(G,w)$ is a weighted rooted tree, we define the \emph{root} of $(G,w)$, denoted by $r((G,w))$, to be the root of $G$.

When considering $k$-colorings $c$ of a graph $G$, defined in Section~\ref{sec:intro}, for convenience we will usually index them as $c = (S_i)_{i \in \intv{0,k-1}}$. We say that a vertex $v \in V(G)$ is \emph{colored} $S_i$, for some $i \in \intv{0,k-1}$, if $v \in S_i$.

\medskip
\noindent \textbf{Parameterized complexity.} We refer the reader to~\cite{DF13,CyganFKLMPPS15} for basic background on parameterized complexity, and we recall here only some basic definitions.
%, with special emphasis on tools for polynomial kernelization.
A \emph{parameterized problem} is a language $L \subseteq \Sigma^* \times \mathbb{N}$.  For an instance $I=(x,k) \in \Sigma^* \times \mathbb{N}$, $k$ is called the \emph{parameter}. %Given a classical (non-parameterized) decision problem $L_{c} \subseteq \Sigma^*$ and a function $\kappa: \Sigma^* \rightarrow \mathbb{N}$, we denote by
%$L_{c}/\kappa = \{(x,\kappa(x)\} \mid x \in L_{c}\}$ the associated parameterized problem. \ig{really?}
A parameterized problem is \emph{fixed-parameter tractable} ({\sf FPT}) if there exists an algorithm $\Acal$, a computable function $f$, and a constant $c$ such that given an instance $I=(x,k)$,
$\Acal$ (called an {\sf FPT} \emph{algorithm}) correctly decides whether $I \in L$ in time bounded by $f(k) \cdot |I|^c$.

%For instance, the \textsc{Vertex Cover} problem parameterized by the size of the solution is \fpt.

%A parameterized problem is in \xp if there exists an algorithm $\Acal$ and two computable functions $f$ and $g$ such that given an instance $I=(x,k)$,
%$\Acal$ (called an \xp \emph{algorithm}) correctly decides whether $I \in L$ in time bounded by $f(k) \cdot |I|^{g(k)}$. For instance,  the \textsc{Clique} problem parameterized by the size of the solution is in \xp.
%
%A parameterized problem with instances of the form $I=(x,k)$ is \emph{para-\np-hard} if it is \np-hard for some fixed {\sl constant} value of the parameter $k$. For instance, the \textsc{Vertex Coloring} problem parameterized by the number of colors is para-\np-hard. Note that, unless $\p = \np$, a para-\np-hard problem cannot be in \xp, hence it cannot be \fpt either.

Within parameterized problems, the class {\sf W}[1] may be seen as the parameterized equivalent to the class \np of classical optimization problems. Without entering into details (see~\cite{DF13,CyganFKLMPPS15} for the formal definitions), a parameterized problem being {\sf W}[1]-\emph{hard} can be seen as a strong evidence that this problem is {\sl not} \fpt. The canonical example of {\sf W}[1]-hard problem is \textsc{Independent Set} parameterized by the size of the solution\footnote{Given a graph $G$ and a parameter $k$, the problem is to decide whether there exists $S \subseteq V(G)$ such that $|S| \geq k$ and $E(G[S]) = \es$.}.

The class {\sf W}[2] of parameterized problems is a class that contains $\W$[1], and such that the problems that are {\sf W}[2]-\emph{hard} are  even more unlikely to be \fpt than those that are {\sf W}[1]-hard (again, see~\cite{DF13,CyganFKLMPPS15} for the formal definitions). The canonical example of {\sf W}[2]-hard problem is \textsc{Dominating Set} parameterized by the size of the solution\footnote{Given a graph $G$ and a parameter $k$, the problem is to decide whether there exists $S \subseteq V(G)$  such that  $|S| \leq k$ and $N[S] = V(G)$.}.

For $i \in \intv{1,2}$, to transfer ${\sf W}[i]$-hardness from one problem to another, one uses a \emph{parameterized reduction}, which given an input $I=(x,k)$ of the source problem, computes in time $f(k) \cdot |I|^c$, for some computable function $f$ and a constant $c$, an equivalent instance $I'=(x',k')$ of the target problem, such that $k'$ is bounded by a function depending only on $k$.

Hence, an equivalent definition of $\W$[1]-hard (resp. $\W$[2]-hard) problem is any problem that admits a parameterized reduction from \textsc{Independent Set} (resp. \textsc{Dominating Set}) parameterized by the size of the solution.

\section{Some useful gadgets}
\label{sec:gadgets}

%\todo{say that we copy the $N$-gadget from~\cite{Julio}}

In this section we introduce some gadgets that will be used in the reductions presented in the following sections. As mentioned in the introduction, the first reduction is from \textsc{Independent Set}, and the second one is from \textsc{Dominating Set}. Most of these gadgets are inspired by~\cite{Julio}.

%\ig{update} We are going to present two reductions.
%Namely, one from \textsc{Independent Set} to \textsc{Weighted Coloring}, and another one from \textsc{Dominating Set} to \textsc{Weighted Coloring}.
%We start by introducing some common gadget for both reductions.

Let us first fix $(G,k)$, an instance of either \textsc{Independent Set} or \textsc{Dominating Set}.
We denote by $(G',w)$ the instance of  \textsc{Weighted Coloring} we are going to construct.
We define $n = |V(G)|$ and we fix a bijection $\beta: V(G) \to \intv{0,n-1}$, which will allow us to define our gadgets depending on integers $j \in \intv{0,n-1}$ corresponding, via $\beta$,  to the vertices of $G$.
%We plan to define an instance $((G',w),k')$ of  \textsc{Weighted Coloring}.
% We also fix $\epsilon>0$ such that $nk\epsilon = o(\frac{1}{2^{4n}})$ and define $M = k(n-1)\epsilon + \sum_{i\in \intv{0,4k+3}}\frac{1}{2^i}$.
% Note that $\epsilon$ is define such that $M < 2$.
We define  $M = k(n-1)\varepsilon + \sum_{i\in \intv{0,4k+3}}\frac{1}{2^i}$, where $\varepsilon$ is any real number satisfying $0 < \varepsilon < \frac{1}{n k 2^{4k+3}}$, which implies that $M < 2$.
%$\varepsilon>0$, $nk\varepsilon = o(\frac{1}{2^{4n}})$, and $M < 2$.
We define, for each $i \in \intv{0,4k+3}$ and for each $j \in \intv{0,n}$, $w_i^j = \frac{1}{2^i}+j\varepsilon$.
%, and we define $\Wcal = \{w_i^j \mid i \in \intv{0,k+3}, j \in \intv{0,n}\}$.
We also define, for each $\ell \in \intv{0,3}$, $W_\ell = w_{4n+\ell}^0 = \frac{1}{2^{4n+\ell}}$.

We first define a particular family of \emph{binomial trees} $B_i$, $i\in \intv{0,4n + 3}$, depicted in Figure~\ref{fig:bin}, as done in \cite{Julio}.
They will be crucially used in the construction of $(G',w)$.
Their role is to force the color of most of the nodes in any coloring $c$ of $G'$ with $w(c) ≤ M$.
Note that the notion of binomial trees has also been used, for instance, in \cite{WDEMP09,BonnetF0S15}.

\begin{definition}
  For each $i \in \intv{0,4k+3}$, we define recursively the weighted rooted tree $B_i$, called \emph{binomial tree}, as follows:
  \begin{itemize}
  \item if $i = 0$, then $B_0$ has a unique node of weight $w_0^0$,
%  \item otherwise $B_i$ has a root of weight $w_i^0$, this root has $i$ children that are the root of  copies of $(B_j)_{j \in \intv{0,i-1}}$
  \item otherwise, $B_i$ has a root $r$ of weight $w_i^0$ and, for each $j \in \intv{0,i-1}$, we introduce a copy of $B_j$ and we connect its root to $r$.
  \end{itemize}
\end{definition}

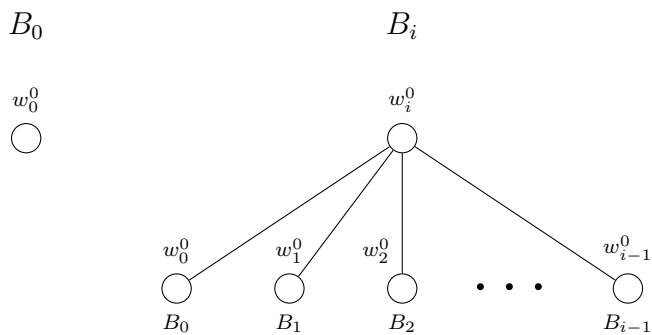
\begin{figure}[h!] %\vspace{-.25cm}
  \centering
  \begin{tikzpicture}
    \node[draw, circle, black, label={90:\scriptsize{$w_0^0$}}]  at (0,0){};
    \node[black] (w00) at (0,1.5){$B_0$};

    \node[draw, circle, black, label={90:\scriptsize{$w_i^0$}}] (wi0) at (5,0){};
    \node[black] (w00) at (5,1.5){$B_i$};

    \node[draw, circle, black, label={90:\scriptsize{$w_0^0$}}, label={-90:\scriptsize{$B_0$}}] (w00) at (2,-2){};
    \node[draw, circle, black, label={90:\scriptsize{$w_1^0$}}, label={-90:\scriptsize{$B_1$}}] (w10) at (3.5,-2){};
    \node[draw, circle, black, label={95:\scriptsize{$w_2^0$}}, label={-90:\scriptsize{$B_2$}}] (w20) at (5,-2){};
    \node[black] at (6.5,-2){\Huge{$\cdots$}};
    \node[draw, circle, black, label={90:\scriptsize{$w_{i-1}^0$}}, label={-90:\scriptsize{$B_{i-1}$}}] (wi10) at (8,-2){};

    \draw (wi0) -- (w00);
    \draw (wi0) -- (w10);
    \draw (wi0) -- (w20);
    \draw (wi0) -- (wi10);

  \end{tikzpicture}
  \caption{
The binomial trees $B_0$ and $B_i$, $i > 0$.
The vertices labeled $B_j$ are the root of a copy of $B_j$, for each $j \in \intv{0,i-1}$. The weights are also depicted on top of the vertices.
}

  \label{fig:bin}
\end{figure}

\begin{lemma}[Ara\'ujo \emph{et al.}\cite{Julio}]
\label{lemma:bin}
Let $i \in \intv{0,4k+3}$ and let $(T,w)$ be a weighted forest having $B_i$ as
a subtree. If there exists a coloring $c$ of $(T, w)$ with $w(c) ≤ M$, then, for any $\ell \in \intv{0,i}$:
\begin{itemize}
\item all vertices of $B_i$ with weight in $w_\ell^0$ receive the same color $S_\ell$ of $c$ and
\item there exists a unique color class $S_\ell$ in $c$ of  weight in $\{w_\ell^j \mid j \in \intv{0,n}\}$.
\end{itemize}
\end{lemma}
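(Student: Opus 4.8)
The plan is to prove the lemma by induction on $i$, exploiting the recursive structure of the binomial tree $B_i$. The key quantitative fact I would establish first is a lower bound on the weight of any coloring of a forest containing $B_i$: since the weights $w_\ell^0 = \tfrac{1}{2^\ell}$ (up to the tiny $\varepsilon$-perturbations) are chosen so that $w_\ell^0 > \sum_{m > \ell} w_m^0$ — a binary-expansion-style domination — any proper coloring of $B_i$ must use at least $i+1$ color classes whose weights are, respectively, at least $w_0^0, w_1^0, \ldots, w_i^0$ in some order. This is because $B_i$ contains a clique-free but ``staircase''-like structure: $r(B_i)$ together with the roots of its children $B_0,\ldots,B_{i-1}$ form a path-like obstruction, and more carefully, each $B_\ell$ forces its own ``private'' heavy color. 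The summation bound $\sum_{\ell=0}^{i} w_\ell^0 \le \sum_{\ell=0}^{4k+3}\tfrac{1}{2^\ell} < M$ (with the $\varepsilon$ slack absorbed by the definition of $M$) is exactly the budget that makes $w(c)\le M$ a tight constraint, so equality must essentially hold class by class.

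The main argument: assume $c$ is a coloring of $(T,w)$ with $w(c)\le M$ and $B_i$ a subtree of $T$. Consider the root $r$ of $B_i$, of weight $w_i^0$, and its $i$ children which are the roots of $B_0,\ldots,B_{i-1}$. By the inductive hypothesis applied to each subtree $B_j$ (which is also a subtree of $T$), for every $\ell \in \intv{0,j}$ all weight-$w_\ell^0$ vertices of $B_j$ share a common color, and there is a unique color class of weight in $\{w_\ell^h : h \in \intv{0,n}\}$. The delicate point is to glue these local statements across the different children: the unique ``level-$\ell$'' class of $B_{\ell}$, of $B_{\ell+1}$, \ldots, of $B_{i-1}$ must all be the \emph{same} class of $c$, for otherwise $c$ would contain two classes each of weight $\ge w_\ell^0 - n\varepsilon$, and combined with one private class of weight $\ge w_m^0 - n\varepsilon$ for each other level $m$, the total would exceed $M$. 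Here I would write out the inequality carefully: $2 w_\ell^0 + \sum_{m \ne \ell} w_m^0 = w_\ell^0 + \sum_{m} w_m^0$, and since $w_\ell^0 \ge W_0 = \tfrac{1}{2^{4n}} \gg nk\,2^{4k+3}\varepsilon$, this strictly exceeds $M = k(n-1)\varepsilon + \sum_{i} \tfrac{1}{2^i}$. So all these level-$\ell$ classes coincide; call it $S_\ell$. Then the root $r$, of weight $w_i^0$, cannot be colored any $S_\ell$ with $\ell < i$ (it is adjacent to the root of $B_\ell$, which has a vertex — namely itself — of weight $w_\ell^0$ in $S_\ell$; more precisely $r(B_\ell)\in S_\ell$ by the inductive conclusion), so $r$ needs a fresh class, which becomes $S_i$, of weight at least $w_i^0$; the budget forces its weight to lie in $\{w_i^h : h\in\intv{0,n}\}$ and forces uniqueness.

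For the base case $i=0$, $B_0$ is a single vertex of weight $w_0^0$; it lies in some class $S_0$ of $c$ whose weight is at least $w_0^0$, and the budget $w(c)\le M$ forces its weight to be in $\{w_0^h\}$ and this class to be unique, which is immediate. I would also need a small preliminary observation, stated once and reused: if $c$ has $w(c)\le M$, then $c$ has at most $4k+4$ nonempty color classes (since each has weight $\ge W_3 = \tfrac{1}{2^{4n+3}} > 0$ and $M < 2$ forces... actually one needs $c$'s classes to be ``aligned'' to the levels), and each class of weight $> \tfrac12$ corresponds to a unique level. The cleanest route is: every color class has weight of the form (something close to a negative power of $2$), the weights $w_\ell^j$ for distinct $\ell$ are separated by gaps far larger than $n\varepsilon$, and $\sum_\ell w_\ell^0$ already nearly saturates $M$ — so there is ``no room'' for any duplication or for any class whose weight is not a perturbed power of two. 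Once this rigidity is in place, the inductive step above goes through.

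The step I expect to be the main obstacle is the \emph{gluing/uniqueness} argument — showing that the level-$\ell$ classes arising from the inductive hypothesis applied to different children $B_j$ (for various $j \ge \ell$) are forced by the weight budget to be literally the same class of $c$, and simultaneously that no ``extra'' class of comparable weight can exist. This requires setting up the counting inequality on $w(c)$ precisely, keeping track of the $j\varepsilon$ perturbations (bounded by $k(n-1)\varepsilon$ in total, by the way $M$ was defined), and making sure the strict inequality goes the right way. Everything else is a fairly mechanical induction on the recursive definition of $B_i$.
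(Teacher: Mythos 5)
This lemma is not proved in the paper at all: it is imported verbatim from Ara\'ujo \emph{et al.}~\cite{Julio}, so there is no in-paper proof to compare against. Your sketch follows what is essentially the argument of that source: induction on $i$ over the recursive structure of $B_i$, with the budget $w(c)\leq M$ doing all the work because $\sum_{\ell=0}^{i}w_\ell^0 = 2-\tfrac{1}{2^{i}}$ already nearly saturates $M<2$ and the $\varepsilon$-slack $k(n-1)\varepsilon$ is smaller than the smallest weight $\tfrac{1}{2^{4k+3}}$. The skeleton is sound, and the counting inequality you write for uniqueness ($w_\ell^0+\sum_{m}w_m^0>M$) is the right one. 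A few points to tighten. First, the step you flag as the main obstacle --- gluing the level-$\ell$ classes coming from different children $B_j$ --- is actually free: the second bullet of the inductive hypothesis is a \emph{global} statement about $c$ (``there exists a unique color class of weight in $\{w_\ell^j\}$''), so applying it to any one child already identifies all of them; your separate counting argument is valid but redundant. Second, a class containing a vertex of weight $w_\ell^0$ has weight $\geq w_\ell^0$ exactly, not $\geq w_\ell^0 - n\varepsilon$ (the perturbations only push weights up), and the inductive uniqueness claim for the root's class should be run against the classes $S_0,\dots,S_{i-1}$ produced so far rather than against all $4k+4$ levels. Third, your preliminary observation that $c$ has at most $4k+4$ nonempty classes neither follows from the stated bounds nor is needed for this lemma. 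Finally, the conclusion that $w(S_\ell)$ lies exactly in $\{w_\ell^j\mid j\in\intv{0,n}\}$ (rather than merely in $[w_\ell^0,w_{\ell-1}^0)$) silently uses that every vertex weight of $(T,w)$ belongs to the fixed set $\{w_i^j\}\cup\{W_0,\dots,W_3\}$; this holds for all instances constructed in the paper but should be stated as a hypothesis.
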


As we shall see later, the choice of the weight of a color class $S_\ell$ corresponds to choosing (or not) a vertex to be part of the solution of the corresponding problem. Each time that a vertex is chosen, we will have to ``pay'' an additional weight of $(n-1)\varepsilon$ in the total weight of the coloring. The selected value of $M$ forces that we will be able to choose $k$ vertices.

In every graph we are going to build in the following, we assume that $B_{4k+3}$ is a subtree of our graph.
If this is not the case,
we introduce a new connected component that contains only $B_{4k+3}$.
This permits to identify a color by its weight.
Indeed, in any coloring $c = (S_i)_{i \in \intv{0,\ell}}$, where $\ell \geq 4k+3$, of weight at most $M$, we have that for each $i \in \intv{0,4k+3}$,
$S_i$ is the only color such that $w(S_i) \in \{w_i^j \mid j \in \intv{0,n}\}$.
We denote, for each $\ell \in \intv{0,3}$, $R_\ell = S_{4k+\ell}$ to be the unique color of weight $W_\ell$.

We also define the \emph{auxiliary tree} $A_i^j$ for each $i \in \intv{0,4k-1}$ and each $j \in \intv{0,n}$, as defined in \cite{Julio}. This auxiliary tree is depicted in Figure~\ref{fig:aux}.

\begin{definition}
  For each $i \in \intv{0,4k-1}$ and each $j \in \intv{0,n}$, we define  the weighted rooted
tree $A_i^j$, called \emph{auxiliary tree}, as follows.
\begin{itemize}
\item We first introduce two vertices $u$ and $v$ such that $u$ is the root of $A_i^j$, $v$ is connected to $u$, $w(u) = W_0$, and $w(v) = w_i^j$.
\item for each $\ell \in \intv{0,i-2}$, we introduce a copy of $B_\ell$ and we connect the root of this copy to $v$.
\item for each $\ell \in \intv{0,4k-1} \sm \{i-1\}$, we introduce a copy of $B_\ell$ and we connect the root of this copy to $u$.
\end{itemize}
The vertex $v$ is called the \emph{subroot} of $A_i^j$. Note that $A_i^j$ consists of $2^{4k}$ nodes.
\end{definition}

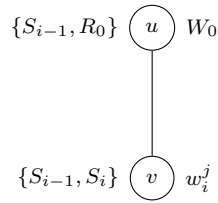
\begin{figure}[h!]
  \centering
  \begin{tikzpicture}
    \node[draw, circle, black, label={180:\scriptsize{$\{S_{i-1},R_0\}$}}, label={0:\scriptsize{$W_0$}}] (I1) at (0,2){\scriptsize{$u$}};
    \node[draw, circle, black, label={180:\scriptsize{$\{S_{i-1}, S_i\}$}}, label={0:\scriptsize{$w_i^j$}}] (I2) at (0,0){\scriptsize{$v$}};
    \draw (I2) -- (I1);
  \end{tikzpicture}
  \caption{
The auxiliary tree $A_i^j$, $i \in \intv{0,4k-1}$ and $j \in \intv{0,n}$.
The binomial trees are not depicted.
Next to each vertex, its weight and the set of colors this vertex can receive (see Lemma~\ref{lemma:aux}) are depicted.}
  \label{fig:aux}
\end{figure}

\begin{lemma}[Ara\'ujo \emph{et al.}\cite{Julio}]
\label{lemma:aux}
  Let $i \in \intv{0,4k-1}$, let $j \in \intv{0,n}$,
and let $(T, w)$ be any weighted forest having $B_{4k+3}$ and $A_i^j$ as  subtrees.
Let $u$ and $v$ be the root and the subroot of $A_i^j$, respectively.
For any coloring $c$ of $(T, w)$ with weight $w(c) ≤ M$, it holds that:
\begin{itemize}
\item either $v$ is colored $S_{i−1}$ and $u$ must be colored with the color  $R_0$,
\item or $v$ is colored $S_i$ (therefore, $w(S_i ) ≥ w_i^j$ ) and $u$ is  colored either with $S_{i−1}$ or with the color $R_0$.
\end{itemize}
\end{lemma}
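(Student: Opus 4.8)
The plan is to reason about the structure of $A_i^j$ together with the constraints imposed by the presence of $B_{4k+3}$, invoking Lemma~\ref{lemma:bin} as the main workhorse. First I would use Lemma~\ref{lemma:bin} applied to the subtree $B_{4k+3}$: in any coloring $c$ of $(T,w)$ with $w(c)\le M$, for every $\ell\in\intv{0,4k+3}$ there is a \emph{unique} color class $S_\ell$ whose weight lies in $\{w_\ell^{j'}\mid j'\in\intv{0,n}\}$, and in particular $R_0 = S_{4k}$ is the unique color of weight $W_0 = w_{4k}^0$. So the colors $S_{i-1}$, $S_i$, and $R_0$ referred to in the statement are well-defined. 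Then I would analyze the constraints around the subroot $v$: by construction $v$ has weight $w_i^j$ and is attached to copies of $B_0,\ldots,B_{i-2}$; applying Lemma~\ref{lemma:bin} to each of these binomial trees forces the roots of these copies to be colored $S_0,\ldots,S_{i-2}$ respectively, hence $v$ cannot receive any of the colors $S_0,\ldots,S_{i-2}$. Likewise $v$ cannot receive a color of weight strictly less than $w_i^j$ without either exceeding the budget or clashing with a forced color, so the only possibilities left for $v$ are $S_{i-1}$ and $S_i$.

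Next I would handle the root $u$, which has weight $W_0 = w_{4k}^0$ and is attached to copies of $B_\ell$ for all $\ell\in\intv{0,4k-1}\sm\{i-1\}$. By Lemma~\ref{lemma:bin} applied to these copies, $u$ is adjacent to vertices colored $S_\ell$ for every $\ell\in\intv{0,4k-1}\sm\{i-1\}$, so $u$ cannot use any of those colors. Combined with the weight budget (a color of index $<4k$ used on $u$ would have to be $S_{i-1}$, and colors of index $\ge 4k$ available for a weight-$W_0$ vertex are essentially $R_0$), the only colors $u$ can take are $S_{i-1}$ and $R_0$. Finally I would combine the two analyses with the edge $\{u,v\}$: if $v$ is colored $S_{i-1}$, then $u\neq S_{i-1}$, so $u$ must be $R_0$; if instead $v$ is colored $S_i$, then by Lemma~\ref{lemma:bin} (second bullet) the unique class $S_i$ has weight at least $w_i^j$, and $u$ may be either $S_{i-1}$ or $R_0$ since there is no conflict along the edge in either case. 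This yields exactly the two alternatives in the statement.

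The main obstacle I anticipate is making precise the budget bookkeeping that rules out $v$ or $u$ taking a ``high-index'' color other than the ones listed — one has to verify that introducing an extra color, or inflating the weight of an existing color beyond the $w_\ell^j$ range, would push $w(c)$ above $M$; this uses the specific choice of $M$, the geometric weights $w_i^0 = 2^{-i}$, and the smallness of $\varepsilon$ (so that the $j\varepsilon$ perturbations never let one color ``impersonate'' another). Since this lemma is quoted verbatim from Ara\'ujo \emph{et al.}~\cite{Julio}, I would either cite their argument directly or reproduce this counting carefully, but conceptually everything reduces to repeated applications of Lemma~\ref{lemma:bin} to the binomial trees hanging off $u$ and $v$, plus a single edge-constraint at the end.
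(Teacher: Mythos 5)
Your plan is correct: the paper itself imports this lemma from Ara\'ujo \emph{et al.}~\cite{Julio} without reproving it, and your reconstruction --- forcing the roots of the attached copies of $B_\ell$ via Lemma~\ref{lemma:bin}, excluding higher-index colors for $u$ and $v$ by the weight/budget bookkeeping (a vertex of weight $w_i^j$ in a class $S_\ell$ with $\ell>i$, or in a fresh class, would exceed the slack $k(n-1)\varepsilon$ in $M$), and then using the single edge $\{u,v\}$ to derive the two alternatives --- is exactly the intended argument. The only detail worth adding if you write it out is the degenerate case $i=0$ (where $S_{i-1}$ does not exist and $u$ is forced to $R_0$ outright), which is harmless since the reduction only uses $A_i^j$ with $i$ odd.
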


We also need the $R_i$-\textsc{AND} gadget, $i \in \intv{0,1}$, depicted in Figure~\ref{fig:and}, and which is strongly inspired by a similar gadget presented in~\cite{Julio} (called \emph{clause tree}) corresponding to the logical `OR'.
\begin{definition}
  Let $i \in \intv{0,1}$. Given two vertices $I_1$, $I_2$, we define the \emph{$R_i$-\textsc{AND} gadget} between the \emph{input vertices} $I_1$ and $I_2$ as follows:
  \begin{itemize}
  \item We add four new vertices $v_1$, $v_2$, $v_3$, and $O$ and the edges $\{v_1,I_1\}$, $\{v_2,I_2\}$, $\{v_1,v_2\}$, $\{v_2,v_3\}$, and $\{v_3,O\}$.

  \item For each $j \in \intv{1,3}$ and each $\ell \in \intv{0,4k-1}$, we introduce a copy of $B_\ell$ and we connect its root to $v_j$.
  \item For each $\ell \in \intv{0,4k-1}$, we introduce a copy of $B_\ell$ and we connect its root to $O$.
  \item For each $j \in \intv{1,2}$ we introduce a copy of $B_{4k+1-i}$ and we connect its root to $v_j$.
  \item We introduce a copy of $B_{4k+i}$ and a copy of $B_{4k+2}$  and we connect their roots to $v_3$.
  \item We set $w(v_1) = W_2$, $w(v_2) = W_3$, $w(v_3) = W_3$, and $w(O) = W_1$.
  \end{itemize}
The vertex $O$ is called the \emph{output vertex} of the $R_i$-\textsc{AND} gadget.

We naturally extend the definition of the $R_i$-\textsc{AND} gadget to  $\ell$ input vertices with $\ell \geq 2$ by introducing $\ell -1$ $R_i$-\textsc{AND} gadgets in a sequential way.
\end{definition}

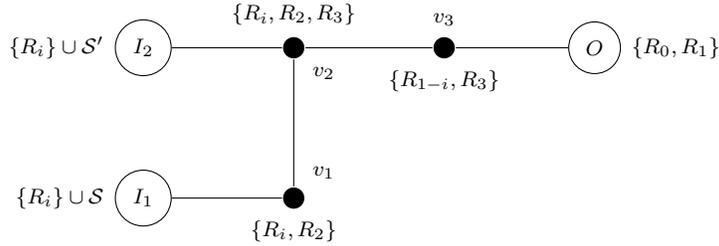
\begin{figure}[h!]
  \centering
  \begin{tikzpicture}
    \node[draw, circle, black, label={180:\scriptsize{$\{R_i\}\cup \Scal$}}] (I1) at (0,0){\scriptsize{$I_1$}};
    \node[draw, circle, black, label={180:\scriptsize{$\{R_i\}\cup \Scal'$}}] (I2) at (0,2){\scriptsize{$I_2$}};
    \node[place, label={90:\scriptsize{$\{R_i,R_2,R_3\}$}}, label={-45:\scriptsize{$v_2$}}] (A1) at (2,2){};
    \node[place, label={-90:\scriptsize{$\{R_i,R_2\}$}}, label={+45:\scriptsize{$v_1$}}] (A2) at (2,0){};
    \node[place, label={-90:\scriptsize{$\{R_{1-i},R_3\}$}}, label={90:\scriptsize{$v_3$}}] (A3) at (4,2){};
    \node[draw, circle, black, label={0:\scriptsize{$\{R_0,R_1\}$}}] (O) at (6,2){\scriptsize{$O$}};
    \draw (A1) -- (I2);
    \draw (O) -- (A3) -- (A1) -- (A2) -- (I1);
  \end{tikzpicture}
  \caption{The $R_i$-\textsc{AND} gadget, for some $i \in \intv{0,1}$, where $I_1$ and $I_2$ are the input vertices and $O$ is the output vertex, and where $\Scal$ and $\Scal'$ are subsets of $\{S_\ell \mid \ell \in \intv{0,4k-1}\} \cup \{W_0, W_1\}$. For each vertex, the associated set is the set of colors that the vertex can receive. Again, the binomial trees are not depicted.}
  \label{fig:and}
\end{figure}

\begin{lemma}
\label{lemma:and}
  Let $i \in \intv{0,1}$,
  let $I_1$ and $I_2$ be the two input vertices of an $R_i$-\textsc{AND} gadget, and let $O$ be its output vertex.
  If $I_1$ and $I_2$ are colored $R_i$, then $O$ must be colored $R_i$.
  Moreover, if either $I_1$ or $I_2$ is not colored $R_i$, then $O$ can be colored either $R_0$ or $R_1$.
\end{lemma}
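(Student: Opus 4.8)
The strategy is a direct case analysis on the possible colors of the internal vertices $v_1$, $v_2$, $v_3$, using the binomial-tree gadgets attached to them to pin down which color classes each vertex may receive, exactly as recorded in Figure~\ref{fig:and}. First I would justify these ``admissible color sets'': since every $v_j$ (and $O$) has a copy of $B_\ell$ attached for every $\ell \in \intv{0,4k-1}$, Lemma~\ref{lemma:bin} forces $v_j$ to avoid all colors $S_0,\ldots,S_{4k-1}$, so $v_j$ can only be colored with one of the four ``reservoir'' colors $R_0, R_1, R_2, R_3$. Then the extra binomial trees of large index further restrict each vertex: the copies of $B_{4k+1-i}$ attached to $v_1$ and $v_2$ eliminate the color $R_{1-i}$ at those vertices (combined with the weight constraint $w(c)\le M$ and the fact that $R_{1-i}$ is the unique color of weight $W_{1-i}$), leaving $v_1 \in \{R_i, R_2\}$ after also using $w(v_1)=W_2$, and $v_2 \in \{R_i, R_2, R_3\}$ after using $w(v_2)=W_3$; similarly the copies of $B_{4k+i}$ and $B_{4k+2}$ at $v_3$ eliminate $R_i$ and $R_2$, so $v_3 \in \{R_{1-i}, R_3\}$; and $O$, by $w(O)=W_1$ together with its attached $B_\ell$'s, satisfies $O \in \{R_0, R_1\}$. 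The input vertices $I_1, I_2$ are by hypothesis in $\{R_i\} \cup \Scal$ and $\{R_i\} \cup \Scal'$ respectively, where $\Scal, \Scal'$ avoid $R_2, R_3$.

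\textbf{First claim (if $I_1, I_2$ are colored $R_i$, then $O$ is colored $R_i$).} Suppose $I_1$ and $I_2$ both receive color $R_i$. Then $v_1$, being adjacent to $I_1$, cannot be colored $R_i$, so by the above $v_1$ is colored $R_2$. Likewise $v_2$ is adjacent to both $I_2$ (colored $R_i$) and $v_1$ (colored $R_2$), so $v_2 \notin \{R_i, R_2\}$, forcing $v_2 = R_3$. Now $v_3$ is adjacent to $v_2$ (colored $R_3$), so $v_3 \notin R_3$, forcing $v_3 = R_{1-i}$. Finally $O$ is adjacent to $v_3$ (colored $R_{1-i}$), so $O \notin R_{1-i}$; since $i \in \intv{0,1}$ we have $R_{1-i} \in \{R_0, R_1\}$, so together with $O \in \{R_0, R_1\}$ this forces $O = R_i$. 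This proves the first statement.

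\textbf{Second claim (if $I_1$ or $I_2$ is not colored $R_i$, then $O$ may be colored $R_0$ or $R_1$).} Here I would exhibit explicit proper colorings of the gadget's internal vertices for each desired value of $O$, and argue they extend to proper colorings of the attached binomial trees respecting the weight bound (the latter is automatic once each $v_j$ gets one of $R_0,\ldots,R_3$, again by Lemma~\ref{lemma:bin}, since each $B_\ell$ is colored canonically). Say $I_1$ is not colored $R_i$; by assumption $I_1 \in \Scal$, which is disjoint from $\{R_2, R_3\}$, so $I_1$ is colored with some $S_\ell$ ($\ell \in \intv{0,4k-1}$) or with $R_{1-i}$ — in any case $I_1$'s color is compatible with coloring $v_1$ by $R_i$. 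Then one valid assignment is: $v_1 = R_i$, $v_2 = R_2$ (permissible since $v_2$ is adjacent only to $I_2$, $v_1$, $v_3$, and $I_2 \notin \{R_2\}$), $v_3 = R_3$, and $O = R_0$ or $O = R_1$ as desired (both differ from $R_3$). If instead $I_1$ is colored $R_i$ but $I_2$ is not, then $I_2 \in \Scal'$, disjoint from $\{R_2, R_3\}$; put $v_1 = R_2$, $v_2 = R_3$ (adjacent to $I_2 \notin R_3$, to $v_1 = R_2$, and to $v_3$), $v_3 = R_i$ (adjacent only to $v_2 = R_3$ among colored neighbours), and then $O$ is adjacent to $v_3 = R_i \notin \{R_0, R_1\}$... wait, $R_i \in \{R_0,R_1\}$, so instead choose $v_3 = R_{1-i}$ when we want $O = R_i$, or note that when $O$ should be the one of $\{R_0,R_1\}$ not equal to... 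I would simply pick, for target $O = R_t$ with $t \in \{0,1\}$, the value $v_3 = R_{1-t}$ if $1-t \in \{i, 1-i\}$ is admissible, i.e. $v_3 \in \{R_{1-i}, R_3\}$, and since $\{R_{1-i}, R_3\}$ always contains a color $\ne R_t$, such a choice exists; propagate backwards to legal choices of $v_2 \in \{R_i,R_2,R_3\}$ and $v_1 \in \{R_i,R_2\}$ avoiding the colors of their already-fixed neighbours, which is always possible since each such vertex has at most two already-colored neighbours among $\{I_1,I_2,v_1,v_2,v_3\}$ and a palette of size $\ge 2$.

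\textbf{Main obstacle.} The routine but delicate part is the bookkeeping in the second claim: verifying that for every combination of ``which input is not $R_i$'' and ``which of $R_0, R_1$ we want at $O$'', the backward propagation $O \leftarrow v_3 \leftarrow v_2 \leftarrow v_1$ never gets stuck, i.e. that the palette restrictions forced by the binomial trees always leave room given the (at most two) neighbouring constraints. This is a finite check over a handful of cases and the structure of the path $I_1\,v_1\,v_2\,v_3\,O$ (with $I_2$ hanging off $v_2$) makes it easy, but it must be done carefully. The justification of the admissible color sets in Figure~\ref{fig:and} from Lemma~\ref{lemma:bin} — in particular that a copy of $B_m$ with $m \ge 4k$ attached to a vertex of weight $W_\ell$ forbids exactly the color $R_{m-4k}$ at that vertex when $w(c) \le M$ — is the conceptual crux, and is essentially inherited from the analysis of the analogous clause-tree gadget in~\cite{Julio}.
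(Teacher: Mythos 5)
Your proposal is correct and follows essentially the same route as the paper: a sequential propagation of forced/available colors along the path $I_1\,v_1\,v_2\,v_3\,O$ using the admissible color sets $v_1\in\{R_i,R_2\}$, $v_2\in\{R_i,R_2,R_3\}$, $v_3\in\{R_{1-i},R_3\}$, $O\in\{R_0,R_1\}$ (which you, unlike the paper, explicitly derive from the attached binomial trees and the vertex weights --- a welcome addition, since the paper leaves this to the figure). The only blemish is the momentary assignment $v_3=R_i$ in your second case, which is inadmissible since $v_3$'s palette is $\{R_{1-i},R_3\}$; your eventual corrected assignment ($v_1=R_2$, $v_2=R_i$, $v_3=R_3$) is exactly the paper's, so the slip does not affect the validity of the argument.
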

\begin{proof}
  First, assume that $I_1$ and $I_2$ are colored $R_i$.
  This sequentially implies that $v_1$ must be colored $R_2$, $v_2$ must be colored $R_3$, $v_3$ must be colored $R_{1-i}$, and  $O$ must be colored $R_i$.
  Secondly, assume that $I_1$ is not colored $R_i$.
  This sequentially implies that $v_1$ can be colored $R_i$, $v_2$ can be colored $R_2$, $v_3$ can be colored $R_3$, and therefore $O$ can be colored either $R_0$ or $R_1$.
  Finally, assume that $I_2$ is not colored $R_i$.
  This sequentially implies that $v_2$ can be colored $R_i$, $v_3$ can be colored $R_3$, and so $O$ can be colored either $R_0$ or $R_1$.
\end{proof}

Finally, we define, for each  $i \in \intv{0,k-1}$ and $j \in \intv{0,n-1}$, the \emph{vertex tree} $T_i^j$, depicted in Figure~\ref{fig:tij}, which is also inspired by a similar construction given in~\cite{Julio}, called \emph{variable tree}. The main difference with respect to~\cite{Julio} is that, in our case, the color given to the root of a vertex tree codifies a binary value corresponding to picking or not a vertex in the solution, whereas the gadget of~\cite{Julio} codifies an integer corresponding to the assignment of a group of variables in the \emph{integral} version of $3$-\textsc{SAT} that they consider.

\begin{definition}
  For each $i \in \intv{0,k-1}$ and for each $j \in \intv{0,n-1}$, we define the \emph{vertex tree} $T_i^j$ to be the weighted rooted tree, representing the
  vertex $\beta^{-1}(j)$,  defined as follows.
  \begin{itemize}
  \item We introduce one copy of $A_{4i+1}^{j+1}$ and $A_{4i+3}^{n-j}$ and an $R_0$-\textsc{AND} gadget whose inputs are the two roots of $A_{4i+1}^{j+1}$ and $A_{4i+3}^{n-j}$. We call $u$ the output of the $R_0$-\textsc{AND} gadget and we set $u$ to be the root of $T_i^j$.
  % \item We introduce the vertex $u$, that is the root of $T_i^j$ and we set $w(u) = W_1$.
  % \item for each $\ell \in \intv{0,4k-1}$, we introduce a copy of $B_\ell$ and we connect the root of this copy to $u$.
  \item We introduce one copy of $A_{4i+1}^j$, $A_{4i+1}^{j+1}$, $A_{4i+3}^{n-j}$, and $A_{4i+3}^{n-j-1}$,
    \begin{itemize}
    \item we connect $r(A_{4i+1}^j)$ to $r(A_{4i+3}^{n-j})$ and $r(A_{4i+1}^{j+1})$ to $r(A_{4i+3}^{n-j-1})$, and
    \item we connect $u$ to $r(A_{4i+1}^j)$ and to $r(A_{4i+3}^{n-j-1})$.
\end{itemize}
\end{itemize}
\end{definition}

\medskip

\begin{figure}[htb]
  \centering
  \begin{tikzpicture}
    \node[draw, circle, black, label={90:\scriptsize{$\{R_0,R_1\}$}}] (u) at (0,0){$u$};
    \node[place, label={-90:\scriptsize{$\{S_{4i},R_0\}$}}, label={90:\scriptsize{$A_{4i+1}^j$}}] (A1) at (-2,0){};
    \node[place, label={-90:\scriptsize{$\{S_{4i+2},R_0\}$}}, label={90:\scriptsize{$A_{4i+3}^{n-j}$}}] (A1) at (-4,0){};
    \node[place, label={-90:\scriptsize{$\{S_{4i},R_0\}$}}, label={90:\scriptsize{$A_{4i+1}^{j+1}$}}] (A1) at (4,0){};
    \node[place, label={-90:\scriptsize{$\{S_{4i+2},R_0\}$}}, label={90:\scriptsize{$A_{4i+3}^{n-1-j}$}}] (A1) at (2,0){};

    \node[draw, circle, black, label={-90:\scriptsize{$R_0$-\textsc{AND}}}] (and) at (0,-2){};
    \node[place, label={-90:\scriptsize{$\{S_{4i},R_0\}$}}, label={90:\scriptsize{$A_{4i+1}^{j+1}$}}] (A5) at (2,-2){};
    \node[place, label={-90:\scriptsize{$\{S_{4i+2},R_0\}$}}, label={90:\scriptsize{$A_{4i+3}^{n-j}$}}] (A6) at (-2,-2){};
    \draw (4,0) -- (u) -- (-4,0);
    \draw (A5) -- (and) -- (A6);
    \draw (u) -- (and);

\end{tikzpicture}
  \caption{
The vertex tree $T_i^j$, $i \in \intv{0,k-1}$ and $j \in \intv{0,n-1}$.
The vertices labeled $A_p^q$ are the roots of a copy of $A_p^q$.
For each vertex, the associated set is the set of colors that the vertex can receive.}
  \label{fig:tij}
\end{figure}
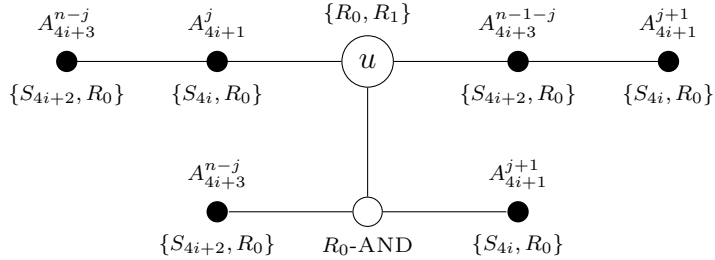

\medskip
\medskip

The usefulness of a vertex tree $T_i^j$ associated with a vertex $v$ corresponding to the integer $j$ is the following. The color of the root $u$ codifies whether vertex $v$ has been chosen in the solution or not. Namely, if $u$ gets color $R_0$ (resp. $R_1$), this means that vertex $v$ is (resp. is not) part of the solution. The following lemma formalizes this idea and guarantees that the choices are {\sl consistent}, in the sense that the choices made in all vertex trees corresponding to the same vertex are the same. It is also important to note that because of the definition of the weights $w_i^j$, each time we choose to color a root of a vertex tree with $R_0$, we have to ``pay'' $(n-1)\varepsilon$ in the total weight. Making $k$ such choices is forced by the properties of the gadgets and the value of $M$.

\newpage

\begin{lemma}
\label{lemma:tij}
%Let $A \subseteq \intv{0,k-1}\times \intv{0,n-1}$.
Let $(T, w)$ be any weighted forest having $B_{4k+3}$ as a subtree and containing, for each $(i,j) \in \intv{0,k-1}\times \intv{0,n-1}$, $T_i^j$ as a subtree.
Let $c$ be a coloring of $(T,w)$ with  $w(c) \leq M$.
Then, there exist $(j_i)_{i \in \intv{0,k-1}} \in \intv{0,n-1}^k$ such that each root $u$ of each
subtree $T_i^j$, $(i,j) \in \intv{0,k-1}\times \intv{0,n-1}$, satisfies:

\begin{itemize}
\item if $j = j_{i}$ for some $i \in \intv{0,k-1}$, then the color of $u$ in $c$ must be $R_0$, and
\item otherwise, the color of $u$ in $c$ must be $R_1$.
%\item if $j \not = j_i$ , then the color of $u$ in $c$ must be $W_1$,
%\item otherwise, no neighbors of $u$ in $T_i^j$ is colored with $R_0$ and $u$ is colored either $R_0$ or $W_1$.
\end{itemize}
\end{lemma}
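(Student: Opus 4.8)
The plan is to read the whole coloring through Lemma~\ref{lemma:bin}, to attach two numerical parameters to each index $i$, to pin those parameters down by a weight count combined with a ``staircase'' of forbidden monochromatic edges, and finally to deduce the color of every root. Fix a coloring $c$ of $(T,w)$ with $w(c)\le M$. Since $B_{4k+3}$ is a subtree of $(T,w)$, Lemma~\ref{lemma:bin} (with $i=4k+3$) gives, for each $\ell\in\intv{0,4k+3}$, a unique color class of $c$ --- call it $S_\ell$ --- with $w(S_\ell)\in\{w_\ell^q\mid q\in\intv{0,n}\}$; write $w(S_\ell)=w_\ell^{q_\ell}$ with $q_\ell\in\intv{0,n}$, and recall that $w_\ell^q=\tfrac{1}{2^\ell}+q\varepsilon$ is strictly increasing in $q$. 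For $i\in\intv{0,k-1}$ set $x_i:=q_{4i+1}$ and $y_i:=q_{4i+3}$. I will prove that $x_i+y_i=n-1$ for every $i$ (so in particular $x_i\in\intv{0,n-1}$) and then take $j_i:=x_i$.

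Two inequalities drive the argument. First, a global weight budget: as $S_0,\dots,S_{4k+3}$ are pairwise distinct color classes of $c$ and all weights are positive, $M\ge w(c)\ge\sum_{\ell=0}^{4k+3}w(S_\ell)=\sum_{\ell=0}^{4k+3}\tfrac{1}{2^\ell}+\bigl(\sum_{\ell=0}^{4k+3}q_\ell\bigr)\varepsilon$; comparing with $M=\sum_{\ell=0}^{4k+3}\tfrac{1}{2^\ell}+k(n-1)\varepsilon$ and dividing by $\varepsilon>0$ yields $\sum_{\ell=0}^{4k+3}q_\ell\le k(n-1)$, hence $\sum_{i=0}^{k-1}(x_i+y_i)\le k(n-1)$ since the $q_\ell$ are non-negative. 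Second, $x_i+y_i\ge n-1$ for each fixed $i$: for every $j\in\intv{0,n-1}$ the tree $T_i^j$ contains the edge $\{r(A_{4i+1}^j),\,r(A_{4i+3}^{n-j})\}$, and by Lemma~\ref{lemma:aux} the first endpoint is colored $S_{4i}$ or $R_0$ and the second $S_{4i+2}$ or $R_0$, so they are not both $R_0$ and at least one carries its $S$-color; if $r(A_{4i+1}^j)$ is colored $S_{4i}$ ($\ne R_0$), Lemma~\ref{lemma:aux} forces the subroot of $A_{4i+1}^j$ to be colored $S_{4i+1}$, so $w(S_{4i+1})\ge w_{4i+1}^{j}$, that is, $x_i\ge j$; symmetrically $r(A_{4i+3}^{n-j})$ colored $S_{4i+2}$ gives $y_i\ge n-j$. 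Thus $x_i\ge j$ or $y_i\ge n-j$ for all $j\in\intv{0,n-1}$. If $x_i+y_i\le n-2$, then $x_i\le n-2$ (as $y_i\ge0$), so $j:=x_i+1\in\intv{0,n-1}$, and at this $j$ both alternatives fail --- $x_i\ge j$ is false, and $y_i\le n-2-x_i<n-1-x_i=n-j$ --- a contradiction. Hence $x_i+y_i\ge n-1$ for all $i$, and together with the budget every inequality is tight: $x_i+y_i=n-1$ for all $i$.

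It remains to fix the color of the root $u$ of each $T_i^j$. Being the output of an $R_0$-\textsc{AND} gadget, $u$ is colored $R_0$ or $R_1$ by Lemma~\ref{lemma:and}. For $j=x_i$, the two inputs of that gadget are $r(A_{4i+1}^{x_i+1})$ and $r(A_{4i+3}^{n-x_i})$; since $w(S_{4i+1})=w_{4i+1}^{x_i}<w_{4i+1}^{x_i+1}$, Lemma~\ref{lemma:aux} forbids coloring the subroot of $A_{4i+1}^{x_i+1}$ with $S_{4i+1}$, so this subroot gets $S_{4i}$ and its root is forced to $R_0$; using $n-x_i=y_i+1$ and $w(S_{4i+3})=w_{4i+3}^{y_i}<w_{4i+3}^{y_i+1}$, the other input is forced to $R_0$ in the same way, so by Lemma~\ref{lemma:and} the output $u$ is colored $R_0$. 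For $j\ne x_i$, suppose $u$ were colored $R_0$; since $u$ is adjacent to $r(A_{4i+1}^j)$ and to $r(A_{4i+3}^{n-j-1})$, both of these are forced to carry their $S$-colors, which as above yields $x_i\ge j$ and $y_i\ge n-j-1$; using $y_i=n-1-x_i$, the latter becomes $j\le x_i$, so $j=x_i$, a contradiction. Hence $u$ is colored $R_1$ whenever $j\ne x_i$, and taking $j_i:=x_i$ gives precisely the stated dichotomy.

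The one step that is not routine bookkeeping is the inequality $x_i+y_i\ge n-1$ and its interplay with the budget: one must notice that the edges $\{r(A_{4i+1}^j),r(A_{4i+3}^{n-j})\}$, $j\in\intv{0,n-1}$, force $x_i+y_i\ge n-1$ for each $i$ \emph{separately}, and that summing these $k$ lower bounds exactly consumes the $\varepsilon$-slack $k(n-1)\varepsilon$ available in $M$, so that all the inequalities are in fact equalities. The rest is unwinding the definitions of $A_p^q$, the $R_0$-\textsc{AND} gadget, and $T_i^j$, and quoting Lemmas~\ref{lemma:bin}, \ref{lemma:aux} and~\ref{lemma:and} at each relevant adjacency.
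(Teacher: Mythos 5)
Your proof is correct and follows essentially the same route as the paper's: read off the color classes via Lemma~\ref{lemma:bin}, use the adjacency of $r(A_{4i+1}^j)$ and $r(A_{4i+3}^{n-j})$ together with Lemma~\ref{lemma:aux} to force $w(S_{4i+1})+w(S_{4i+3})\geq w_{4i+1}^0+w_{4i+3}^0+(n-1)\varepsilon$ for each $i$, match this against the $k(n-1)\varepsilon$ slack in $M$ to turn all inequalities into equalities, and then read off the root colors from the tightness of the weights and the $R_0$-\textsc{AND} gadgets. Your contrapositive derivation of $x_i+y_i\geq n-1$ is a slightly cleaner packaging of the paper's three-case/minimal-index argument, and the one informal point (that the output of an $R_0$-\textsc{AND} gadget can only receive $R_0$ or $R_1$, which really follows from the binomial trees attached to $O$ and its weight $W_1$ rather than from the statement of Lemma~\ref{lemma:and}) is handled at the same level of rigor as in the paper itself.
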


\begin{proof}
By Lemma~\ref{lemma:bin} and since we assume that $w(c) \leq M$ and that $B_{4k+3}$ occurs in $(T,w)$ as a subtree, it follows that $c=(S_i)_{i \in \intv{0,\ell}}$ with $\ell \geq 4k+3$ and for each $i \in \intv{0,4k+3}$, $w(S_i) \in \{w_i^j\mid j \in \intv{0,n}\}$.

Let $i \in \intv{0,k-1}$.
Given  $j \in \intv{0,n}$,
as  $T_i^j$ or $T_i^{j-1}$ is a subgraph of $T$ (in fact, if $j\notin \{0,n-1\}$, both are), we know that there exist a copy of $A_{4i+1}^j$ with root $r_{4i+1}^j$ and a copy of $A_{4i+3}^{n-j}$ with root $r_{4i+3}^{n-j}$ such that $r_{4i+1}^j$ and $r_{4i+3}^{n-j}$ are adjacent.
This implies that, for each $j \in \intv{0,n}$,

%  we cannot color both  $r_{4i+1}^j$ and $r_{4i+3}^{n-j}$ with the color $R_0$. This combined with Lemma~\ref{lemma:aux} implies that, for each $j \in \intv{0,n}$,
% \ign{\textbf{1.} why ``either''? So far, we just proved that AT LEAST one of both roots $r_{4i+1}^j$ and $r_{4i+3}^{n-j}$ cannot be colored with $R_0$, but we did not prove that EXACTLY one of them will be colored with $R_0$ (a priori, one of them could be colored with $S_{4i+2}$ and the other one with $S_{4i}$, why not?)}
% \begin{eqnarray*}
% &&w(S_{4i+1}) \geq w_{4i+1}^j \mbox{ or} ~~~~~~~~~~~~~~~~~~~~~~~~~~~~~~~~~~~~~~~~~~~~~~~~~~~~~~~~~~~~~~~~~~~~~~~~~~(1_j)\\
% &&w(S_{4i+3}) \geq w_{4i+3}^{n-j}.       ~~~~~~~~~~~~~~~~~~~~~~~~~~~~~~~~~~~~~~~~~~~~~~~~~~~~~~~~~~~~~~~~~~~~~~~~~~~~~(2_j)
% \end{eqnarray*}
\begin{eqnarray*}
&&c(r_{4i+1}^j) \not = R_0 \mbox{ or} ~~~~~~~~~~~~~~~~~~~~~~~~~~~~~~~~~~~~~~~~~~~~~~~~~~~~~~~~~~~~~~~~~~~~~~~~~~(1_j)\\
&&c(r_{4i+3}^{n-j}) \not = R_0.       ~~~~~~~~~~~~~~~~~~~~~~~~~~~~~~~~~~~~~~~~~~~~~~~~~~~~~~~~~~~~~~~~~~~~~~~~~~~~~(2_j)
\end{eqnarray*}
Note that, by Lemma~\ref{lemma:aux}, for each $j \in \intv{0,n}$, $(1_j)$ implies that $w(S_{4i+1}) \geq w_{4i+1}^j$ and $(2_j)$ implies that $w(S_{4i+3}) \geq w_{4i+3}^{n-j}$.
Therefore, one of the following cases necessarily occurs:
\begin{itemize}
\item $(1_n)$ is satisfied and so  $w(S_{4i+1}) \geq w_{4i+1}^n$,
\item $(2_0)$ is satisfied and so  $w(S_{4i+3}) \geq w_{4i+3}^{n-0}$, or
\item $(1_0)$ and $(2_n)$ are satisfied
%\ign{\textbf{2.} By Lemma~\ref{lemma:bin}, both conditions $(1_0)$ and $(2_n)$ are ALWAYS satisfied by any coloring of weight at most $M$, so this 3rd case does not give any extra information!}
and, since for each $j \in \intv{0,n}$ at least one of $(1_j)$ and $(2_j)$ holds, the
integer $j^* = \min \{j \mid 0 \leq j \leq n-1 \text{ and property $(2_{j+1})$ is satisfied}\}$ is well-defined. It follows that
%\ign{\textbf{3.} sorry, but we still do not see AT ALL how what is written until here implies the existence of such $j^*$. We have an argument that implies this fact, but it is involved and uses the structure of the vertex trees $T_i^j$. Please, provide a proof! (the kind of arguments that we need are those at the end of the proof, discussing that there exists a ``threshold'' $j^*$, etc.)}
 both $(1_{j^*})$ and $(2_{j^*+1})$ are satisfied, which implies that
$w(S_{4i+1}) \geq w_{4i+1}^{j^*}$ and $w(S_{4i+3}) \geq w_{4i+3}^{n-(j^*+1)}$.
\end{itemize}

In the first two cases, using that $w(S_{4i+1}) \geq w_{4i+1}^0$ and $w(S_{4i+3}) \geq w_{4i+3}^{0}$, we obtain $w(S_{4i+1}) + w(S_{4i+3}) \geq w_{4i+1}^0 + w_{4i+3}^0 + n\varepsilon$. In the third case, we obtain $w(S_{4i+1}) + w(S_{4i+3}) \geq (w_{4i+1}^0 + j^*\varepsilon) + (w_{4i+3}^0 + (n-(j^*+1))\varepsilon) = w_{4i+1}^0 + w_{4i+3}^0 + (n-1) \varepsilon$.

Thus, it always holds that $w(S_{4i+1}) + w(S_{4i+3}) \geq w_{4i+1}^0 + w_{4i+3}^0 + (n-1) \varepsilon$.

%Using that  $w(S_{4i+1}) \geq w_{4i+1}^0$ and $w(S_{4i+3}) \geq w_{4i+3}^{0}$, we obtain
%\ign{\textbf{4.} the next inequality follows directly from the fact that $w(S_{4i+1}) \geq w_{4i+1}^{j^*}$ and $w(S_{4i+3}) \geq w_{4i+3}^{n-(j^*+1)}$, and the definition of the weights $w_i^j$, why do you say that you need to use $w(S_{4i+1}) \geq w_{4i+1}^0$ and $w(S_{4i+3}) \geq w_{4i+3}^{0}$ (which are always true, by the way)?}
%that $w(S_{4i+1}) + w(S_{4i+3}) \geq w_{4i+1}^0 + w_{4i+3}^0 + (n-1)\varepsilon$.
% Indeed
% we have that $w(S_{4i+1}) \geq w_{4i+1}^0$, $w(S_{4i+3}) \geq w_{4i+3}^{0}$, and that one of the following happen:
% \begin{itemize}
% \item $w(S_{4i+1}) \geq w_{4i+1}^n \geq w_{4i+1}^0 +n\varepsilon$,
% \item $w(S_{4i+3}) \geq w_{4i+3}^{n-0} \geq w_{4i+3}^0 +n\varepsilon$, or
% \item there exists $j^*$ such that $w(S_{4i+1}) \geq w_{4i+1}^{j^*}$ and $w(S_{4i+3}) \geq w_{4i+3}^{n-(j^*+1)}$.
% %and so $w(S_{4i+1}) + w(S_{4i+3}) \geq w_{4i+1}^0 + w_{4i+3}^0 + (n-1)\varepsilon$.
% \end{itemize}
Therefore,
\begin{alignat*}{2}
w(c) & \geq  \sum_{i \in \intv{0,k-1}}(w(S_{4i})+w(S_{4i+1})+w(S_{4i+2})+w(S_{4i+3})) + \sum_{i \in \intv{0,3}} w(R_i) \\
     & \geq  \sum_{i \in \intv{0,k-1}}(w_{4i}^0+ w_{4i+1}^0 + w_{4i+2}^0+ w_{4i+3}^0 + (n-1)\varepsilon) +\sum_{i \in \intv{0,3}} W_i\\
     & =  M.
\end{alignat*}
By definition of $c$, we have  $w(c) = M$,  for each $i \in \intv{0,3}$, $w(R_i) = W_i$, and for each $i \in \intv{0,k-1}$,
$w(S_{4i}) = w_{4i}^0$, $w(S_{4i+2}) = w_{4i+2}^0$, and $w(S_{4i+1}) + w(S_{4i+3}) = w_{4i+1}^0 + w_{4i+3}^0 + (n-1)\varepsilon$.
Moreover, for each $4k+3 < i \leq \ell$, $w(S_i) = 0$.

\newpage

Let us fix $i^* \in \intv{0,k-1}$.
The equation $w(S_{4i^*+1}) + w(S_{4i^*+3}) = w_{4i^*+1}^0 + w_{4i^*+3}^0 + (n-1)\varepsilon$ implies the existence of $j^*\in \intv{0,n-1}$ such that
$w(S_{4i^*+1}) = w_{4i^*+1}^{j*}$ and $w(S_{4i^*+3}) = w_{4i^*+3}^{n-1-j*}$.
Thus, for each $j > j^*$, the root of any copy of $A_{4i^*+1}^{j}$ must be colored $R_0$ and
for each $j < j^*$, the root of any copy of $A_{4i^*+3}^{n-1-j}$ must be colored $R_0$.
This implies that for each $j \in \intv{0,n-1}\sm \{j^*\}$, the root of $T_{i^*}^{j}$ must be colored $R_1$.
Moreover, as in $T_{i^*}^{j^*}$ the roots of the copy of  $A_{4i^*+1}^{j^*+1}$ and the copy of $A_{4i^*+3}^{n-j^*}$ must be colored $R_0$ (otherwise, $w(S_{4i^*+1}) \geq w_{4i^*+1}^{j*+1}> w_{4i^*+1}^{j*}$ or $w(S_{4i^*+3}) \geq w_{4i^*+3}^{n-j*} > w_{4i^*+3}^{n-1-j*}$),  the $R_0$-\textsc{AND} gadget ensures that the root of $T_{i^*}^{j^*}$ is colored $R_0$.
\end{proof}

\section{$\W[1]$-hardness reduction}
\label{sec:W1}

In this section we present a parameterized reduction from
\textsc{Independent Set}
to \textsc{Weighted Coloring} on forests.

%The following theorem is well-known (see~\cite{DF13, CyganFKLMPPS15}).

%\smallskip
%
%\paraprobl
%{\sc \textsc{Independent Set}}
%{A graph $G$ and an integer $k$.}
%{Does there exist $S \subseteq V(G)$ such that $|S| \leq k$ and $E(G[S]) = \es$?}
%{$k$.}
%
%\smallskip
%
%
%The following theorem is well-known and can be found, for instance, in~\cite{DF13, CyganFKLMPPS15}.

%\begin{theorem}
%  \label{th:indepset}
%  \textsc{Independent Set} is $\W[1]$-hard when parameterized by the size of the solution.
%\end{theorem}
%
%Using Theorem~\ref{th:indepset}, we prove the following.

\begin{theorem}
  \label{th:W1}
Given a weighted forest $(G,w)$, the problem of computing  $\sigma(G,w)$ is $\W[1]$-hard when parameterized by the size of a largest connected component of $G$.
  % \textsc{Weighted Coloring} on forests is $\W[1]$-hard when parameterized by the size of the largest connected component of the input graph.
\end{theorem}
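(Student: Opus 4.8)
The plan is to give a parameterized reduction from \textsc{Independent Set}. Let $(G,k)$ be an instance of \textsc{Independent Set}, put $n=|V(G)|$, and fix the bijection $\beta\colon V(G)\to\intv{0,n-1}$ as in Section~\ref{sec:gadgets}. The weighted forest $(G',w)$ we construct is the disjoint union of the following pieces, with no edges added between distinct pieces: (i) one copy of $B_{4k+3}$; (ii) for every $i\in\intv{0,k-1}$ and every $j\in\intv{0,n-1}$, one copy of the vertex tree $T_i^j$; (iii) for every ordered pair of distinct \emph{slots} $i\neq i'$ in $\intv{0,k-1}$ and every ordered pair $(v,v')$ of vertices of $G$ with either $v=v'$ or $\{v,v'\}\in E(G)$, one \emph{conflict gadget} $C_{i,i'}^{v,v'}$ defined below. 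Since the pieces are pairwise disjoint, each $T_i^j$ and each $C_{i,i'}^{v,v'}$ is a connected component of $G'$, which is what will keep the parameter small.

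Recall from Lemma~\ref{lemma:tij} that, in any coloring of $(G',w)$ of weight at most $M$, every slot $i$ picks a single index $j_i\in\intv{0,n-1}$ with $w(S_{4i+1})=w_{4i+1}^{j_i}$ and $w(S_{4i+3})=w_{4i+3}^{n-1-j_i}$, and this $j_i$ is precisely the index $j$ for which the root of $T_i^j$ is coloured $R_0$. The gadget $C_{i,i'}^{v,v'}$ is built so as to \emph{forbid} the configuration ``$j_i=\beta(v)$ and $j_{i'}=\beta(v')$''. By Lemma~\ref{lemma:aux}, the root of a fresh copy of $A_{4i+1}^{\beta(v)+1}$ is forced to receive colour $R_0$ exactly when $w(S_{4i+1})<w_{4i+1}^{\beta(v)+1}$, i.e.\ exactly when $j_i\le\beta(v)$; symmetrically, the root of a fresh copy of $A_{4i+3}^{n-\beta(v)}$ is forced to receive $R_0$ exactly when $j_i\ge\beta(v)$. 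Hence, feeding fresh copies of $A_{4i+1}^{\beta(v)+1}$, $A_{4i+3}^{n-\beta(v)}$, $A_{4i'+1}^{\beta(v')+1}$ and $A_{4i'+3}^{n-\beta(v')}$ as the four input vertices of a (four-input) $R_0$-\textsc{AND} gadget, its output vertex $Z$ is forced to receive colour $R_0$ exactly when all four roots are $R_0$, i.e.\ exactly when $j_i=\beta(v)$ and $j_{i'}=\beta(v')$ (Lemma~\ref{lemma:and}). We complete the gadget by adding a vertex $z^\star$ of weight $W_0$, the edge $\{Z,z^\star\}$, and, for every $\ell\in\intv{0,4k+3}\sm\{4k\}$, a copy of $B_\ell$ whose root is joined to $z^\star$; using $B_{4k+3}$ and Lemma~\ref{lemma:bin}, $z^\star$ must then be coloured $R_0=S_{4k}$ in every coloring of weight at most $M$ (any colour $S_\ell$ with $\ell\neq 4k$ conflicts with an attached binomial tree, and a brand-new colour would raise the total weight above $M$). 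Thus a coloring of weight at most $M$ cannot realize ``$j_i=\beta(v)$ and $j_{i'}=\beta(v')$'', since this would force $Z$ and its neighbour $z^\star$ to share the colour $R_0$; conversely, if $j_i\neq\beta(v)$ or $j_{i'}\neq\beta(v')$, at least one of the four $A$-roots is free (it may take the appropriate $S_{p-1}$ instead of $R_0$), so $Z$ may be coloured $R_1$ and the whole gadget is coloured inside the global colour classes, each of its vertices being no heavier than the colour it receives.

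With this in place the two directions are routine. If $\{v_0,\dots,v_{k-1}\}$ is an independent set of $k$ distinct vertices of $G$, put $j_i=\beta(v_i)$, fix the global colour classes with $w(S_{4i})=w_{4i}^0$, $w(S_{4i+1})=w_{4i+1}^{j_i}$, $w(S_{4i+2})=w_{4i+2}^0$, $w(S_{4i+3})=w_{4i+3}^{n-1-j_i}$ and $w(R_\ell)=W_\ell$, colour $B_{4k+3}$ as dictated by Lemma~\ref{lemma:bin}, colour each $T_i^j$ so that its root gets $R_0$ if and only if $j=j_i$ (which the vertex tree admits), and colour each conflict gadget as in the free case above; since the $v_i$ are distinct and pairwise non-adjacent, no conflict gadget is in its forbidden configuration, so this is a proper coloring of weight exactly $M$, whence $\sigma(G',w)\le M$. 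Conversely, assume $\sigma(G',w)\le M$ and fix a coloring $c$ of weight at most $M$. Lemma~\ref{lemma:tij} (applicable since $B_{4k+3}$ and all $T_i^j$ occur in $G'$) yields indices $(j_i)_{i\in\intv{0,k-1}}$; set $v_i=\beta^{-1}(j_i)$. If some pair $i\neq i'$ had $v_i=v_{i'}$ or $\{v_i,v_{i'}\}\in E(G)$, then $C_{i,i'}^{v_i,v_{i'}}$ would be a component of $G'$ and, by the argument above, $c$ would give colour $R_0$ to the two adjacent vertices $Z$ and $z^\star$ of that gadget --- a contradiction. Hence $\{v_0,\dots,v_{k-1}\}$ is an independent set of size $k$ in $G$. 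Finally, $G'$ has $2^{O(k)}\cdot\mathrm{poly}(n)$ vertices and is constructible in time $2^{O(k)}\cdot\mathrm{poly}(n)$, while every connected component of $G'$ --- a binomial tree $B_\ell$ with $\ell\le 4k+3$, a vertex tree, or a conflict gadget --- has $2^{O(k)}$ vertices (auxiliary trees have $2^{4k}$ vertices and the $\textsc{AND}$ gadgets have $2^{O(k)}$ vertices). Therefore $(G,k)\mapsto\bigl((G',w),M\bigr)$, with the parameter being the size of a largest connected component of $G'$, is a parameterized reduction from \textsc{Independent Set}, and the theorem follows.

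The main obstacle is the design and verification of the conflict gadget: one must check, through Lemmas~\ref{lemma:aux} and~\ref{lemma:and}, that the four auxiliary trees jointly implement the test ``$j_i=\beta(v)$ and $j_{i'}=\beta(v')$''; that pinning $z^\star$ to colour $R_0$ does not secretly create room for an extra colour class, which would destroy the tight budget $M$; and, in the forward direction, that all conflict gadgets \emph{together with} all vertex trees admit a simultaneous coloring realizing the \emph{same} global colour classes of total weight exactly $M$ (in particular, that the auxiliary trees inside the conflict gadgets never push some $w(S_\ell)$ above its allotted value). The feature that makes the reduction work for the largest-component parameter is precisely that every gadget lives in its own connected component, so the gadgets interact only through the globally shared colour classes $S_0,\dots,S_{4k+3}$ and never through edges; verifying that this indirect interaction still transmits all the required independence and distinctness constraints is the delicate point.
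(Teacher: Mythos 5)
Your proposal is correct and follows essentially the same route as the paper's proof: a parameterized reduction from \textsc{Independent Set} in which $k$ ``slots'' encode vertex choices through the weights of the shared colour classes $S_{4i+1},S_{4i+3}$ (via Lemma~\ref{lemma:tij}), and per-edge/per-vertex conflict gadgets living in their own connected components use an $R_0$-\textsc{AND} whose output is blocked from receiving $R_0$. The only difference is cosmetic: the paper feeds the \textsc{AND} gadget with the roots of two full vertex trees $T_{i_1}^{\beta(v_1)},T_{i_2}^{\beta(v_2)}$ and blocks the output with an attached copy of $B_{4k}$, whereas you inline four fresh auxiliary trees as a four-input \textsc{AND} and block the output with an adjacent vertex forced to colour $R_0$ --- functionally equivalent constructions.
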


%\ig{use the notation $\sigma(G,w)$ in the proof}

%\ig{We have to argue that the number of colors is redundant, as it is always upper-bounded by the maximum degree, which is at most the size of the largest connected component}

\begin{proof}
We reduce from \textsc{Independent Set} parameterized by the size of the solution, which is well-known to be $\W[1]$-hard (see~\cite{DF13, CyganFKLMPPS15}).
Let $(G,k)$ be an instance of \textsc{Independent Set}, and let $n = |V(G)|$. Recall that $M = k(n-1)\varepsilon + \sum_{i\in \intv{0,4k+3}}\frac{1}{2^i}$ where $\varepsilon$ is any real number satisfying $0 < \varepsilon < \frac{1}{n k 2^{4k+3}}$, which implies that $M < 2$. Let $\beta: V(G) \to \intv{0,n-1}$ be a bijection.
For each edge $\{v_1,v_2\}$ in $E(G)$ and each $i_1, i_2$ in $\intv{0,k-1}$, we define the weighted rooted tree $H_{\{v_1,v_2\},i_1,i_2}$ as follows.
\begin{itemize}
\item We introduce a copy of $T_{i_1}^{\beta(v_1)}$ and a copy of $T_{i_2}^{\beta(v_2)}$, and call the roots $r_1$ and $r_2$, respectively.
\item We introduce an $R_0$-\textsc{AND} gadget where the input vertices are $r_1$ and $r_2$ and the output is a new vertex $r$.
\item We introduce a copy of $B_{4k}$ and we connect its root to $r$.
\item We set $r$ to be the root of $H_{\{v_1,v_2\},i_1,i_2}$.
\end{itemize}

Note that, by construction, the root $r$ has to be colored $R_1$. We also define, for each vertex $v$ in $V(G)$  and each $i_1, i_2$ in $\intv{0,k-1}$ with $i_1 \neq i_2$, the weighted rooted tree $H_{v,i_1,i_2}$ to be the tree $H_{\{v_1,v_2\},i_1,i_2}$ defined above with $v_1 = v_2 = v$.

We define $(G',w)$ as the disjoint union of the weighted tree $B_{4k+3}$, of each weighted tree of $\{H_{e,i_1,i_2} \mid e \in E(G), i_1,i_2 \in \intv{0,k-1}\}$, of each weighted tree of $\{H_{v,i_1,i_2} \mid v \in V(G), i_1,i_2 \in \intv{0,k-1}, i_1 \neq i_2\}$, and
of each weighted tree of $\{T_i^j \mid i \in \intv{0,k-1}, j \in \intv{0,n-1}\}$.
Note that the size of each connected component of $G'$ is bounded by a function depending only on $k$. Indeed, the size of any connected component is bounded by the size of those of type $H_{e,i_1,i_2}$, which can be easily checked to be at most $2 \cdot (6 \cdot 2^{4k}+4)+4+2^{4k} = 13 \cdot 2^{4k}+12$. Note  that the construction of $(G',w)$ can be performed in time $f(k) \cdot n^{O(1)}$, as required.

\newpage

The idea of the construction is that the trees $H_{\{v_1,v_2\},i_1,i_2}$ defined above guarantee that, for each edge $\{v_1,v_2\}$ of $G$, at most one of $v_1$ and $v_2$ belongs to the independent set. More formally, as the root $r$ of such tree has to be colored $R_1$, by the $R_0$-\textsc{AND} gadget at least one of $r_1$ and $r_2$ has to be colored $R_1$, which translates to the fact that at least one of $v_1$ and $v_2$ does {\sl not} belong to the independent set. Similarly, the trees $H_{v,i_1,i_2}$ guarantee that the same vertex is not picked more than once in the solution.

 %$A_i^j$ contiens $2^{4k}$ sommets
% $B_{4k}$ contiens $2^{4k}$ sommets
% $T_i^j$ contiens $6 \cdot 2^{4k}+4$ sommets
% - H_{e,i1,i2} contiens 2*(6*2^{4k}+4)+4+2^{4k} = 13*2^{4k}+12 sommets
%
%- B_{4k+3} contiens 8*2^{4k} sommets (ce qui est plus petit que 13*2^{4k}+12)
%
%On obtient donc que chaque composante connexe de G' a une taille born? par la taille des graphes H_{e,i1,i2} ?savoir 13*2^{4k}+12.

%\ig{Julien, could you make this function explicit? At least, asymptotically \blue{$\bigstar$}}

%\ig{What about the number of colors? We don't say anything about it. In fact, we don't need to deal with it (see red remark above)}

More formally, we now prove that there exists a solution of \textsc{Independent Set} on $(G,k)$ if and only if $\sigma(G',w) \leq M$.

% there exists a solution of \textsc{Weighted Coloring} on $((G',w),M)$.

Assume first that $Z$ is a solution of  \textsc{Independent Set} on $(G,k)$.
We may assume that $Z$ is of size exactly $k$.
Let $\delta: Z \to \intv{0,k-1}$  be a bijection.
For each $i \in \intv{0,k-1}$, we define $v_i = \delta^{-1}(i)$.
We are going to define $c = (S_i)_{i \in \intv{0,4k+3}}$ such that for each $i \in \intv{0,4k+3}$, $w(S_i) \in \{w_i^j\mid j \in \intv{0,n}\}$.
By Lemma~\ref{lemma:bin}, we can (and we must) color every tree $B_i$ in that way, for each $i \in \intv{0,4k+3}$.
%Then for each tree $A_{4i+1}^j$ and $A_{4i+3}^{m-j-1}$, with $i \in \intv{0,k-1}$ and $j \in \intv{0,n-1}$.
Then for each $j \leq \beta(v_i)$ and each $j' \geq \beta(v_i)$, we set the color of the subroot of each $A_{4i+1}^j$ and each $A_{4i+3}^{m-j'-1}$ to be
to be color $S_{4i+1}$ and $S_{4i+3}$,  respectively, and their root to be colored $S_{4i}$ and $S_{4i+2}$, respectively.
For each $j > \beta(v_i)$ and each $j' < \beta(v_i)$, we set the color of the roots of  each $A_{4i+1}^j$ and each $A_{4i+3}^{m-j'-1}$ to be $R_0$ and the color of their subroots to be $S_{4i+1}$ and $S_{4i+3}$, respectively.
This coloring is possible by Lemma~\ref{lemma:aux}.
Note also that for each $i \in \intv{0,k-1}$, if $j_i = \beta(v_i)$, then we have
$w(S_{4i}) = w_i^0$,
$w(S_{4i+1}) = w_i^{j_i}$,
$w(S_{4i+2}) = w_i^0$, and
$w(S_{4i+3}) = w_i^{m-{j_i}-1}$.
We set the color of the root of each $T_i^j$ such that $j = \beta(\delta^{-1}(i))$ to $R_0$,
and we set the color of the root of each $T_i^j$ such that $j \not = \beta(\delta^{-1}(i))$ to $R_1$.
The colors of the other vertices are forced by the $R_0$-\textsc{AND} gadgets.

As $Z$ is an independent set,  for each edge $\{v_1,v_2\}$ of $G$, at least one of the extremities, say $v_1$, is not in $Z$.
Thus, for each $i_1, i_2$ in $\intv{0,k-1}$, the root of $T_{i_1}^{\beta(v_i)}$ is colored $R_1$ and therefore the root of $H_{\{v_1,v_2\},i_1,i_2}$ can be colored $R_1$, which is the only color available for this vertex.
As in this coloring, for each $\ell \in \intv{0,3}$, $w(R_{\ell}) = W_\ell$, we obtain that $\sigma(G',w) \leq M$.

 %obtain a solution of \textsc{Weighted Coloring} on $(G',M)$.

Conversely, assume that there is an integer $\ell$ and a coloring $c=(S_i)_{i \in \intv{0,\ell}}$ of $G'$ such that $w(c) \leq M$.
%define a solution of \textsc{Weighted Coloring} on $((G',w),M)$.
As there is no weight below $W_3$, from Lemma~\ref{lemma:bin} it follows that $\ell = 4k+3$ and for each $i \in \intv{0,4k+3}$, $w(S_i) \in \{w_i^j\mid j \in \intv{0,n}\}$.
By Lemma~\ref{lemma:tij}, for each $i \in \intv{0,k-1}$, there exists an index  $j_i$ such that the root of each $T_i^{j_i}$ is colored $R_0$.
Let us define $Z = \{\beta^{-1}(j_i)\mid i \in \intv{0,k-1}\}$.
%\ign{\fbox{2} There is a problem here: it could very well be the case that, for different $i_1,i_2 \in \intv{0,k-1}$, it holds that $\beta^{-1}(j_{i_1}) = \beta^{-1}(j_{i_2})$, that is,  the same vertex may have been chosen more than once. This is a problem, since in that case, the independent set $Z$ would have size strictly less than $k$. I think this problem can be easily fixed by adding to $G'$ the trees $H_{v,i_1,i_2}$ described in \fbox{1}}.
Given $i_1$ and $i_2$ in $\intv{0,k-1}$, we claim that there is no edge in $G$ between $\beta^{-1}(j_{i_1})$ and $\beta^{-1}(j_{i_2})$.
Indeed, if the root of $T_{i_1}^{j_{i_1}}$ and the root of $T_{i_2}^{j_{i_2}}$ are colored $R_0$, then the root of
$H_{\{\beta^{-1}(j_{i_1}),\beta^{-1}(j_{i_2})\},i_1,i_2}$ should also be colored $R_0$ because of the $R_0$-\textsc{AND} gadget, but this is not possible because of the tree $B_{4k}$ that is connected to it. A similar argument shows that, because of the trees $H_{v,i_1,i_2}$, for any $i_1,i_2$ in $\intv{0,k-1}$ with $i_1 \neq i_2$, it holds that $\beta^{-1}(j_{i_1}) \neq \beta^{-1}(j_{i_2})$, that is, the same vertex does not occur  more than once in $Z$. This implies that $Z$ is an independent set in $G$ of size exactly $k$, concluding the proof.
\end{proof}

\newpage

\section{$\W[2]$-hardness reduction}
\label{sec:W2}

In this section we present a reduction from
\textsc{Dominating Set}
to \textsc{Weighted Coloring} on forests when the number of colors is prescribed. The reduction is similar to the one presented in Theorem~\ref{th:W1}, but it is somehow simpler and uses the $R_1$-\textsc{AND} gadget instead of the $R_0$-\textsc{AND} gadget.

% the problem of computing $\sigma(G,w;r)$  on forests.

%The following theorem is well-known (see~\cite{DF13, CyganFKLMPPS15}).

% \textsc{Weighted Coloring} on forests when the number of colors is prescribed.

%\smallskip
%
%\paraprobl
%{\sc \textsc{Dominating Set}}
%{A graph $G$ and an integer $k$.}
%{Does there exist $S \subseteq V(G)$  such that  $|S| \leq k$ and $N[S] = V(G)$?}
%{$k$.}
%
%\smallskip
%
%The following theorem is well-known and can be found, for instance, in~\cite{DF13, CyganFKLMPPS15}.

%
%\begin{theorem}
%  \label{th:domset}
%  \textsc{Dominating Set} is $\W[2]$-hard when parameterized by the size of the solution.
%\end{theorem}
%
%
%Using Theorem~\ref{th:domset}, we prove the following.

\begin{theorem}\label{th:W2}
Given a weighted forest $(G,w)$ and a positive integer $r$, the problem of computing $\sigma(G,w;r)$ is $\W[2]$-hard when parameterized by $r$.
\end{theorem}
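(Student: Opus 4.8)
The plan is to give a parameterized reduction from \textsc{Dominating Set} parameterized by the size of the solution, mirroring the structure of the proof of Theorem~\ref{th:W1} but replacing the $R_0$-\textsc{AND} gadgets by $R_1$-\textsc{AND} gadgets and, crucially, fixing the number of colors $r$ so that it depends only on $k$. Given an instance $(G,k)$ of \textsc{Dominating Set} with $n = |V(G)|$ and a bijection $\beta : V(G) \to \intv{0,n-1}$, I would build a weighted forest $(G',w)$ using exactly the same backbone as before: one copy of $B_{4k+3}$ to anchor the color names, one copy of each vertex tree $T_i^j$ for $(i,j) \in \intv{0,k-1}\times\intv{0,n-1}$, so that by Lemma~\ref{lemma:tij} any coloring of weight at most $M$ selects, for each $i$, an index $j_i$ whose vertex tree roots are colored $R_0$ (meaning $\beta^{-1}(j_i)$ is placed in the candidate dominating set $Z$), and again, for each $v$ and each $i_1 \neq i_2$, a tree $H_{v,i_1,i_2}$ forcing the selected vertices to be distinct so that $|Z| = k$. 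The new ingredient is a \emph{domination-checking} gadget for each vertex $v \in V(G)$: for the closed neighbourhood $N[v] = \{u_1,\dots,u_t\}$, one introduces, for each choice of indices $(i_1,\dots,i_t) \in \intv{0,k-1}^t$, an $R_1$-\textsc{AND} gadget whose $t$ input vertices are the roots of $T_{i_1}^{\beta(u_1)},\dots,T_{i_t}^{\beta(u_t)}$, and attaches to its output vertex a copy of $B_{4k}$ (or the appropriate binomial tree) forcing the output \emph{not} to be colored $R_1$. By Lemma~\ref{lemma:and}, this is satisfiable precisely when at least one of the $t$ inputs is \emph{not} colored $R_1$, i.e. some element of $N[v]$ has a vertex-tree root colored $R_0$, i.e. $v$ is dominated by $Z$.

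The key steps, in order: (i) define $(G',w)$ as above and set $r$ to be the total number of colors forced by the construction — by the discussion following Lemma~\ref{lemma:bin} this is $4k+4$ (the classes $S_0,\dots,S_{4k-1}$ together with $R_0,\dots,R_3 = S_{4k},\dots,S_{4k+3}$), so $r = 4k+4$ depends only on $k$, giving a valid parameterized reduction; (ii) verify that $|V(G')|$ is polynomial in $n$ for fixed $k$ — each domination gadget for $v$ contributes $O(k^{\deg(v)+1})$ copies of $R_1$-\textsc{AND} gadgets, each of size $O(2^{4k})$, which is $f(k)\cdot n^{O(1)}$ overall (here one should be slightly careful: iterating over all index tuples $(i_1,\dots,i_t)$ is $k^t$ which is \emph{not} bounded by $f(k)\cdot n^{O(1)}$ in general, so one must instead, as in Theorem~\ref{th:W1}, build the $R_1$-\textsc{AND} gadget over only the distinct roots actually present, or equivalently introduce for each $v$ and each function assigning to each $u \in N[v]$ an index in $\intv{0,k-1}$ a gadget — this is at most $k^{n}$, which is still too much; the correct fix is to observe that it suffices to test, for each $v$, a single $R_1$-\textsc{AND} gadget whose inputs are the roots of $T_i^{\beta(u)}$ ranging over \emph{all} $i\in\intv{0,k-1}$ and all $u \in N[v]$, i.e. $k\cdot|N[v]| \le kn$ inputs, and this is the real content of step (ii)); (iii) prove the forward direction: from a dominating set $Z$ of size $\le k$ (padded to exactly $k$, which is harmless since adding vertices keeps domination), color the $T_i^j$, the $A$-trees and the $B$-trees exactly as in Theorem~\ref{th:W1} so that $w(c) = M$, and check that every $R_1$-\textsc{AND} output can avoid $R_1$ because every $v$ is dominated; (iv) prove the converse: given a coloring $c$ with $w(c) \le M$ and at most $r = 4k+4$ colors, invoke Lemma~\ref{lemma:tij} to extract $Z$, invoke the domination gadgets via Lemma~\ref{lemma:and} to conclude $N[Z] = V(G)$, and invoke the $H_{v,i_1,i_2}$ trees to conclude $|Z| = k$; then $\sigma(G',w;r) \le M$ iff $(G,k)$ is a \textsc{yes}-instance.

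The main obstacle I anticipate is step (ii): making the domination-checking gadget have size polynomial in $n$ for fixed $k$. The naive ``for every index assignment to $N[v]$'' blows up to $k^{\deg(v)}$, which is fine only if $\deg(v) = O(1)$; the clean solution is that a vertex $u$ contributes a ``$u$ is in $Z$'' signal as soon as \emph{some} $T_i^{\beta(u)}$ has root colored $R_0$, and by Lemma~\ref{lemma:tij} all roots $T_i^{\beta(u)}$ for a fixed $i$ agree, so one only needs one $R_1$-\textsc{AND} gadget per $v$ whose input set is $\{r(T_i^{\beta(u)}) : u \in N[v],\, i \in \intv{0,k-1}\}$, of size $\le kn$; its output, forced away from $R_1$, is then colorable iff at least one input root is colored $R_0$ iff $N[v]\cap Z \ne \emptyset$. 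A secondary subtlety is checking that prescribing $r = 4k+4$ does not accidentally make the instance infeasible in the \textsc{no} direction in a way not captured by the weight bound — but since every coloring of weight $\le M$ already uses exactly these $4k+4$ color classes (Lemma~\ref{lemma:bin} and the remark after it), the constraint $r = 4k+4$ is automatically consistent with $w(c) \le M$, and conversely any $r$-coloring is in particular a coloring, so $\sigma(G',w;r) \le M \iff \sigma(G',w) \le M$ on this instance; this closes the equivalence and, combined with the running-time bound and the fact that $r$ depends only on $k$, establishes the $\W[2]$-hardness. Finally, I would note the promised corollary: composing with the ETH-based lower bound of Chen \emph{et al.}~\cite{ChenHKX06} for \textsc{Dominating Set}, this reduction shows there is no $n^{o(r)}$ algorithm for $\sigma(G,w;r)$ on bounded-treewidth graphs unless the ETH fails, matching Equation~(\ref{eq:algorithm}).
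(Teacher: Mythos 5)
Your proposal is correct and matches the paper's proof in all essentials: one $R_1$-\textsc{AND} gadget per vertex $v$ whose inputs are the roots of $T_i^{\beta(u)}$ for all $i\in\intv{0,k-1}$ and $u\in N[v]$, with the output forced away from $R_1$ by an attached binomial tree (the paper attaches $B_{4k+1}$, not $B_{4k}$, since it is the color $R_1=S_{4k+1}$ that must be blocked), the number of colors fixed to $r=4k+4$, and the equivalence argued via Lemmas~\ref{lemma:bin}, \ref{lemma:and} and~\ref{lemma:tij} exactly as you describe. The only deviations are cosmetic: the paper omits the distinctness gadgets $H_{v,i_1,i_2}$ because a dominating set of size \emph{at most} $k$ suffices (repeated indices are explicitly allowed), and your self-correction in step (ii) lands precisely on the paper's construction of $H_v$.
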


%\ig{use the notation $\sigma(G,w;r)$ in the proof}

\begin{proof}
% We reduce from \textsc{Dominating Set}.
% Let $(G,k)$ be an instance of \textsc{Dominating Set}.
% Let $\sigma: V(G) \to \intv{0,|V(G)|-1}$ be a bijection.
We reduce from \textsc{Dominating Set} parameterized by the size of the solution, which is well-known to be $\W[2]$-hard (see~\cite{DF13, CyganFKLMPPS15}).
Let $(G,k)$ be an instance of \textsc{Dominating Set}, and let $n = |V(G)|$.
Recall again that $M = k(n-1)\varepsilon + \sum_{i\in \intv{0,4k+3}}\frac{1}{2^i}$ where $\varepsilon$ is any real number satisfying $0 < \varepsilon < \frac{1}{n k 2^{4k+3}}$, which implies that $M < 2$. Let $\beta: V(G) \to \intv{0,n-1}$ be a bijection. % \ig{change the notation of $\sigma$!!!}.
For each vertex $v \in V(G)$, we define the weighted rooted tree $H_v$ as follows.
\begin{itemize}
\item For each $i \in \intv{0,k-1}$ and each $j \in \beta(N[v])$, we introduce a copy of $T_i^j$  and call its root $r_i^j$.
\item We introduce an $R_1$-\textsc{AND} gadget where the input vertices are the vertices of $\{r_i^j \mid i \in \intv{0,k-1}, j \in \beta(N[v])\}$, and let $r$ be the output.
\item We introduce a copy of $B_{4k+1}$ and we connect its root to $r$.
\item We set $r$ to be the root of $H_v$.
\end{itemize}
%We then define $G'$ as the disjoint union of each weighted forest of $\{H_v \mid v \in V(G)\}$.

We then define $(G',w)$ as the disjoint union of the weighted tree $B_{4k+3}$, of each weighted tree of
$\{H_v \mid v \in V(G)\}$, and
of each weighted tree of $\{T_i^j \mid i \in \intv{0,k-1}, j \in \intv{0,n-1}\}$. Finally, we set $r = 4k+4$. Note that $r$ depends only on $k$ and that the construction of $(G',w)$ can be performed in time $f(k) \cdot n^{O(1)}$, as required.

%\ig{we have to argue that the number of colors is bounded by a function of $k$! Julien's answer: This follows from Lemma~\ref{lemma:bin} and the fact that $B_{4k+3}$ is in $G'$ and the fact that each weight is at least $W_3$}
%
%\ig{Julien, could you please give the precise value of $r$, the number of colors in the solution? I think that $r = 4k+4$, right? \blue{$\bigstar$}}.

The idea of this construction is to guarantee that a dominating set in $G$ must contain, for each $v \in V(G)$, at least one vertex in $N[v]$. In the tree $H_v$, this is captured by forbidding its root $r$ to be colored $R_1$, which by the $R_1$-\textsc{AND} gadget implies that at least one of the roots of the trees $T_i^j$ must be colored $R_0$, meaning that at least one vertex in $N[v]$ belongs to the solution.

Formally, we now prove that there exists a solution of \textsc{Dominating Set} on $(G,k)$ if and only if $\sigma(G',w;r) \leq M$.

%there exists a solution of \textsc{Weighted Coloring} on $((G',w),M)$. \ig{update!!}

First assume that $Z$ is a solution of  \textsc{Dominating Set} on $(G,k)$.
We may assume that $Z$ is of size exactly $k$.
Let $\delta: Z \to \intv{0,k-1}$  be a bijection.
For each $i \in \intv{0,k-1}$, we define $v_i = \delta^{-1}(i)$.
We are going to define $c = (S_i)_{i \in \intv{0,4k+3}}$ such that for each $i \in \intv{0,4k+3}$, $w(S_i) \in \{w_i^j\mid j \in \intv{0,n}\}$, in the same way we did for Theorem~\ref{th:W1}.
By Lemma~\ref{lemma:bin}, we can (and we must) color every tree $B_i$ in that way, for $i \in \intv{0,4k+3}$.
%Then for each tree $A_{4i+1}^j$ and $A_{4i+3}^{m-j-1}$, with $i \in \intv{0,k-1}$ and $j \in \intv{0,n-1}$.
Then for each $j \leq \beta(v_i)$ and each $j' \geq \beta(v_i)$, we set the color of the subroot of each $A_{4i+1}^j$ and each $A_{4i+3}^{m-j'-1}$ to be
to be color $S_{4i+1}$ and $S_{4i+3}$,  respectively, and their root to be colored $S_{4i}$ and $S_{4i+2}$, respectively.
For each $j > \beta(v_i)$ and each $j' < \beta(v_i)$, we set the color of the roots of each $A_{4i+1}^j$ and each $A_{4i+3}^{m-j'-1}$ to be $R_0$ and the color of their subroot to be $S_{4i+1}$ and $S_{4i+3}$, respectively.
Again, this coloring is possible by Lemma~\ref{lemma:aux}.
Note also that for each $i \in \intv{0,k-1}$, if $j_i = \beta(v_i)$, then we have
$w(S_{4i}) = w_i^0$,
$w(S_{4i+1}) = w_i^{j_i}$,
$w(S_{4i+2}) = w_i^0$, and
$w(S_{4i+3}) = w_i^{m-{j_i}-1}$.
We set the color of the root of each $T_i^j$ such that $j = \beta(\delta^{-1}(i))$ to $R_0$,
and we set the color of the root of each $T_i^j$ such that $j \not = \beta(\delta^{-1}(i))$ to $R_1$.
The colors of the other vertices are forced by the $R_1$-\textsc{AND} gadgets.

As $Z$ is a dominating set of $G$,  for each $v \in V(G)$, at least one of the vertices $r_i^j$, $i \in \intv{0,k-1}$, $j \in \beta(N[v])$, of $H_v$ is colored $R_0$.
So we can affect the color $R_0$ to the root of $H_v$, which is, by construction, the only available color for this vertex.
As in this coloring, for each $\ell \in \intv{0,3}$, $w(R_{\ell}) = W_\ell$, we obtain that $\sigma(G',w;r) \leq M$. %\ig{argue about the number of colors}

%a solution of \textsc{Weighted Coloring} on $(G',M)$.

Conversely, assume that there is an integer $\ell$ and a coloring $c=(S_i)_{i \in \intv{0,\ell}}$ of $G'$ that $\sigma(G,w;\ell +1) \leq M$.
%is a solution of \textsc{Weighted Coloring} on $((G',w),M)$ \ig{update notation}.
As there is no weight below $W_3$, from Lemma~\ref{lemma:bin} it follows that $\ell = 4k+3$ and for each $i \in \intv{0,4k+3}$, $w(S_i) \in \{w_i^j\mid j \in \intv{0,n}\}$.
By Lemma~\ref{lemma:tij},  for each $i \in \intv{0,k-1}$, there exists an index  $j_i$ such that the root of each $T_i^{j_i}$ is colored $R_0$.
Let us define $Z = \{\beta^{-1}(j_i)\mid i \in \intv{0,k-1}\}$, where the same vertex may have been chosen for different indices in $\intv{0,k-1}$.
Let $v$ be a vertex of $G$.
As, by construction, the root of $H_v$ can only receive the color $R_0$ in any coloring of weight at most $M$, this implies that at least one vertex $r_{i^*}^{j^*}$, $i^* \in \intv{0,k-1}$, $j^* \in \beta(N[v])$, of $H_v$ is colored $R_0$.
This implies, by Lemma~\ref{lemma:tij}, that $\beta^{-1}(j^*) \in Z$.
Moreover, $\beta^{-1}(j^*) \in N[v]$.
It follows that $Z$ is a dominating set in $G$ of size at most $k$.
\end{proof}

\medskip
\medskip

Note that the proof of Theorem~\ref{th:W2} shows that, if $(G,k)$ in an instance of \textsc{Dominating Set}, then the number of colors of the constructed instance satisfies $r = 4k+4 = O(k)$. Note also that it is easy to strengthen the lower bound given by Theorem~\ref{th:W2} to apply to {\sl trees} instead of forests. Indeed,  we can just add a new vertex $v$, attach it to every connected component of the forest $G'$ built in the reduction, and give to $v$ a weight that does not conflict with any of the weights of its neighbors. By possibly using a new color containing only $v$, it still holds that $r = O(k)$.

The above paragraph together with the fact that, assuming the ETH, \textsc{Dominating Set} parameterized by the size of the solution cannot be solved in time $f(k) \cdot n^{o(k)}$ for any computable function $f$~\cite{ChenHKX06} imply the following corollary.

\begin{corollary}\label{cor:optimal} Assuming the ETH, there is no algorithm that, given a weighted tree $(G,w)$ and a positive integer $r$, computes $\sigma(G,w;r)$ in time $f(r)\cdot n^{o(r)}$ for any computable function $f$.
\end{corollary}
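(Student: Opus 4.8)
The plan is to derive the corollary by composing the tree-version of the reduction from the proof of Theorem~\ref{th:W2} with the known ETH lower bound for \textsc{Dominating Set} of Chen \emph{et al.}~\cite{ChenHKX06}, which asserts that, unless the ETH fails, there is no algorithm solving \textsc{Dominating Set} parameterized by the solution size $k$ in time $f(k)\cdot n^{o(k)}$ for any computable function $f$. So I would argue by contradiction: assume an algorithm $\mathcal{A}$ as in the statement of the corollary exists.

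First I would recall that, given an instance $(G,k)$ of \textsc{Dominating Set} with $n=|V(G)|$, the reduction of Theorem~\ref{th:W2}, together with the remark immediately preceding the corollary (which turns the produced forest into a tree at no cost in the number of colors), yields in time $f_1(k)\cdot n^{O(1)}$ a \emph{weighted tree} $(G',w)$ and an integer $r=4k+4$ such that $(G,k)$ is a yes-instance of \textsc{Dominating Set} if and only if $\sigma(G',w;r)\leq M$, where $M<2$ is the quantity fixed in Section~\ref{sec:gadgets}. Thus, running $\mathcal{A}$ on $(G',w,r)$ and comparing its output with $M$ decides \textsc{Dominating Set} on $(G,k)$.

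Next I would control the two quantities feeding into the running time of $\mathcal{A}$: the number $N=|V(G')|$ of vertices and the prescribed number of colors $r$. On the one hand $r=4k+4=\Theta(k)$, hence $o(r)=o(k)$. On the other hand $G'$ is essentially the disjoint union of $B_{4k+3}$, of $n$ trees $H_v$, and of $kn$ vertex trees $T_i^j$, where each $T_i^j$ has size $2^{O(k)}$ and each $H_v$ has size $O(kn\cdot 2^{O(k)})$; therefore $N=2^{O(k)}\cdot n^{O(1)}$. Plugging these bounds into the assumed running time $f(r)\cdot N^{o(r)}$ of $\mathcal{A}$ gives
\[
f(4k+4)\cdot\bigl(2^{O(k)}\cdot n^{O(1)}\bigr)^{o(k)}\ =\ f(4k+4)\cdot 2^{o(k^2)}\cdot n^{o(k)}\ =\ g(k)\cdot n^{o(k)}
\]
for some computable function $g$, because $2^{o(k^2)}$ depends only on $k$. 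Adding the $f_1(k)\cdot n^{O(1)}$ time spent on the construction, we would obtain a $g'(k)\cdot n^{o(k)}$-time algorithm for \textsc{Dominating Set}, contradicting~\cite{ChenHKX06} under the ETH.

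The only delicate point is the final displayed manipulation: one must verify that substituting the parameter blow-up $N=2^{O(k)}\cdot n^{O(1)}$ and $r=\Theta(k)$ into the $o(r)$ exponent still produces $n^{o(k)}$ rather than something larger. This is exactly where it matters that $r$ is \emph{linear} in $k$ (so that $o(r)=o(k)$) and that the exponential-in-$k$ factor of $N$, once raised to the power $o(k)$, collapses to a function of $k$ alone that can be absorbed into the multiplicative term. Everything else is a routine composition of a parameterized reduction with a conditional lower bound.
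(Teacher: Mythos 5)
Your proof is correct and follows essentially the same route as the paper: compose the tree version of the Theorem~\ref{th:W2} reduction (with $r=4k+4=O(k)$) with the ETH-based $f(k)\cdot n^{o(k)}$ lower bound for \textsc{Dominating Set} of Chen \emph{et al.} The paper leaves the composition implicit, whereas you spell out the one genuinely non-trivial bookkeeping step --- that the $2^{O(k)}$ blow-up in $|V(G')|$, raised to the $o(k)$ exponent, collapses into the parameter-dependent factor --- which is a welcome addition but not a different argument.
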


In particular, Corollary~\ref{cor:optimal} implies that on forests, and more generally on graphs of bounded treewidth, the running time stated in Equation~(\ref{eq:algorithm}), which in this case is equal to $n^{O(r)}$, is asymptotically optimal under the ETH.

%Since when $(G,w)$ is a weighted forest we may assume that $r = O(\log n)$~\cite{GZ97,SaRe06}, it also follows from Corollary~\ref{cor:optimal} that
%
%, which provides a simpler alternative proof to the result of Ara\'ujo \emph{et al}.~\cite{Julio}

%
%\section{Concluding remarks}
%\label{sec:concl}

\newpage

{\small
\bibliographystyle{endm}
\bibliography{Biblio}

\begin{thebibliography}{10}
\expandafter\ifx\csname url\endcsname\relax
  \def\url#1{\texttt{#1}}\fi
\expandafter\ifx\csname urlprefix\endcsname\relax\def\urlprefix{URL }\fi
\newcommand{\enquote}[1]{``#1''}

\bibitem{Julio}
Ara{\'{u}}jo, J., N.~Nisse and S.~P{\'{e}}rennes, \emph{Weighted coloring in
  trees}, {SIAM} Journal on Discrete Mathematics \textbf{28} (2014),
  pp.~2029--2041.

\bibitem{BonnetF0S15}
Bonnet, E., F.~Foucaud, E.~J. Kim and F.~Sikora, \emph{Complexity of grundy
  coloring and its variants}, in: \emph{Proc. of the 21st International
  Conference on Computing and Combinatorics (COCOON)},  LNCS  \textbf{9198},
  2015, pp. 109--120.

\bibitem{ChenHKX06}
Chen, J., X.~Huang, I.~A. Kanj and G.~Xia, \emph{Strong computational lower
  bounds via parameterized complexity}, Journal of Computer and System Sciences
  \textbf{72} (2006), pp.~1346--1367.

\bibitem{CyganFKLMPPS15}
Cygan, M., F.~V. Fomin, L.~Kowalik, D.~Lokshtanov, D.~Marx, M.~Pilipczuk,
  M.~Pilipczuk and S.~Saurabh, \enquote{Parameterized Algorithms,} Springer,
  2015.

\bibitem{WDEMP09}
de~Werra, D., M.~Demange, B.~Escoffier, J.~Monnot and V.~T. Paschos,
  \emph{Weighted coloring on planar, bipartite and split graphs: Complexity and
  approximation}, Discrete Applied Mathematics \textbf{157} (2009),
  pp.~819--832.

\bibitem{DWMP02}
Demange, M., D.~de~Werra, J.~Monnot and V.~T. Paschos, \emph{Weighted node
  coloring: When stable sets are expensive}, in: \emph{Proc. of the 28th
  International Workshop on Graph-Theoretic Concepts in Computer Science (WG)},
   LNCS  \textbf{2573} (2002), pp. 114--125.

\bibitem{Die05}
Diestel, R., \enquote{Graph Theory,}  Graduate texts in mathematics
  \textbf{173}, Springer-Verlag, 2005.

\bibitem{DF13}
Downey, R.~G. and M.~R. Fellows, \enquote{Fundamentals of Parameterized
  Complexity,} Texts in Computer Science, Springer, 2013.

\bibitem{EMP06}
Escoffier, B., J.~Monnot and V.~T. Paschos, \emph{Weighted coloring: further
  complexity and approximability results}, Information Processing Letters
  \textbf{97} (2006), pp.~98--103.

\bibitem{GZ97}
Guan, D.~J. and X.~Zhu, \emph{A coloring problem for weighted graphs},
  Information Processing Letters \textbf{61} (1997), pp.~77--81.

\bibitem{ImpagliazzoP01}
Impagliazzo, R., R.~Paturi and F.~Zane, \emph{Which problems have strongly
  exponential complexity?}, Journal of Computer and System Sciences \textbf{63}
  (2001), pp.~512--530.

\bibitem{Karp72}
Karp, R.~M., \emph{Reducibility among combinatorial problems}, in: \emph{Proc.
  of a symposium on the Complexity of Computer Computations}, The IBM Research
  Symposia Series (1972), pp. 85--103.

\bibitem{KaMe09}
Kavitha, T. and J.~Mestre, \emph{Max-coloring paths: tight bounds and
  extensions}, Journal of Combinatorial Optimization \textbf{24} (2012),
  pp.~1--14.

\bibitem{SaRe06}
{Linhares Sales}, C. and B.~A. Reed, \emph{Weighted coloring on graphs with
  bounded tree width}, in: \emph{Annals of 19th International Symposium on
  Mathematical Programming}, 2006.

\bibitem{PemmarajuPR05}
Pemmaraju, S.~V., S.~Penumatcha and R.~Raman, \emph{Approximating interval
  coloring and max-coloring in chordal graphs}, {ACM} Journal of Experimental
  Algorithmics \textbf{10} (2005).

\end{thebibliography}
}
\end{document}